\newtheorem{theorem}{Theorem}[section]
\newtheorem{lemma}[theorem]{Lemma}
\newtheorem{assumption}[theorem]{Assumption}
\newtheorem{remark}[theorem]{Remark}
\renewcommand{\Re}{\mathrm{Re}}
\renewcommand{\Im}{\mathrm{Im}}
\renewcommand{\d}{\mathop{}\!\mathrm{d}}
\newcommand{\hc}{\mathrm{h.c.}}
\newcommand{\tr}{\mathrm{tr}}
\newcommand{\id}{\mathbbm{1}}
\newcommand{\Hilbert}{\mathcal{H}}
\newcommand{\Fock}{\mathcal{F}}
\newcommand{\Gock}{\mathcal{G}}
\newcommand{\Number}{\mathcal{N}}
\newcommand{\RRR}{\mathbb{R}}
\newcommand{\NNN}{\mathbb{N}}
\newcommand{\norm}[1]{\left\| #1 \right\|}
\newcommand{\scp}[2]{\big\langle #1 , #2 \big\rangle}
\newcommand{\SCP}[2]{\Big\langle #1 , #2 \Big\rangle}
\newcommand{\bra}[1]{\langle #1 |}
\newcommand{\ket}[1]{| #1 \rangle}
\newcommand{\ketbra}[2]{| #1 \rangle \langle #2 |}
\newcommand{\ketbr}[1]{| #1 \rangle \langle #1 |}
\newcommand{\gammapart}{\mu_{\Psi_N(t)}^{\mathrm{part}}}
\newcommand{\gammapartFock}{\gamma_{\chi_{\leq N}(t)}^{\mathrm{part}}}
\newcommand{\gammapartFockPhi}{\gamma_{\Phi}^{\mathrm{part}}}
\newcommand{\gammapartFockz}{\gamma_{\chi_{0}(t)}^{\mathrm{part}}}
\newcommand{\betapart}{\beta_{\chi_{\leq N}(t)}^{\mathrm{part}}}
\newcommand{\betapartPhi}{\beta_{\Phi}^{\mathrm{part}}}
\newcommand{\betapartzz}{\beta_{00}^{\mathrm{part}}}
\newcommand{\betapartzo}{\beta_{01}^{\mathrm{part}}}
\newcommand{\gammafield}{\mu_{\Psi_N(t)}^{\mathrm{field}}}
\newcommand{\gammafieldFock}{\gamma_{\chi_{\leq N}(t)}^{\mathrm{field}}}
\newcommand{\gammafieldFockPhi}{\gamma_{\Phi}^{\mathrm{field}}}
\newcommand{\gammafieldFockz}{\gamma_{\chi_{0}(t)}^{\mathrm{field}}}
\newcommand{\betafield}{\beta_{\chi_{\leq N}(t)}^{\mathrm{field}}}
\newcommand{\betafieldPhi}{\beta_{\Phi}^{\mathrm{field}}}
\newcommand{\betafieldzo}{\beta_{01}^{\mathrm{field}}}
\begin{document}

\title{Bogoliubov Dynamics and Higher-order Corrections for the Regularized Nelson Model}

\author{
Marco Falconi\thanks{Dipartimento di Matematica, Politecnico di Milano, Piazza Leonardo da Vinci 32, 20133 Milano, Italy. Email: \texttt{marco.falconi@polimi.it}}, 
Nikolai Leopold\thanks{University of Basel, Department of Mathematics and Computer Science, Spiegelgasse 1, 4051 Basel,	Switzerland. Email: \texttt{nikolai.leopold@unibas.ch}},
David Mitrouskas\thanks{Institute of Science and Technology (IST) Austria, Am Campus 1, 3400 Klosterneuburg, Austria. Email: \texttt{david.mitrouskas@ist.ac.at}}, and 
Sören Petrat\thanks{Department of Mathematics and Logistics, Jacobs University Bremen, Campus Ring 1, 28759 Bremen, Germany. Email: \texttt{s.petrat@jacobs-university.de}}
}

\frenchspacing

\maketitle

\begin{abstract}
We study the time evolution of the Nelson model in a mean-field limit in which $N$ non-relativistic bosons weakly couple (w.r.t. the particle number) to a positive or zero mass quantized scalar field. Our main result is the derivation of the Bogoliubov dynamics and higher-order corrections. More precisely, we prove the convergence of the approximate wave function to the many-body wave function in norm, with a convergence rate proportional to the number of corrections taken into account in the approximation. We prove an analogous result for the unitary propagator. As an application, we derive a simple system of PDEs describing the time evolution of the first- and second-order approximation to the one-particle reduced density matrices of the particles and the quantum field, respectively.
\end{abstract}

\noindent
\textbf{MSC class:} 35Q40, 35Q55, 81Q05, 81T10, 81V73, 82C10

\section{Introduction}
\label{sec:introduction}

The Nelson model describes the interaction between a relativistic scalar field and nonrelativistic particles. Since its introduction in the
community of mathematical physics by E.\ Nelson \cite{nelson} it has served as a playground for a rigorous study of many
features of quantum field theory, in a simple but nonetheless realistic
model. It was originally introduced to study the interaction of nonrelativistic spinless nucleons with a scalar meson field  but has also been used in condensed matter physics and quantum optics to describe either particles in
interaction with phonons in a crystal, or particles interacting with
radiation (in a linearized approximation). In this paper, we focus on the
regular Nelson Hamiltonian, whose interaction is cut off for high momenta by
an ultraviolet regularization, as this analysis serves as a starting point
for the more interesting case of a renormalized interaction without
ultraviolet regularization.

The Nelson model has been studied from different points of view (see,
e.g., \cite{A2000, DG1999, GW2018, P2005, M2006, M2006_2, LSTT2018} and
references therein). Here, we consider $N\gg 1$ nonrelativistic particles coupled to the quantum field, and focus on its effective description in a regime in which the nonrelativistic particles are close to a Bose--Einstein
condensate and the scalar field is macroscopic, and thus behaving classically (with $N^{-1/2}$ quantifying the ``degree of quantumness'', as a semiclassical $\hslash$
parameter). This regime has already been studied using semiclassical
techniques \cite{AF2014,AF2017}, Fock space methods \cite{falconi} or a
combination of the coherent state approach and the method of counting
\cite{LP2018}.  In \cite{davies, hiroshima1998, teufel} it has, moreover,
been shown that the effect of the quantum field on the particles can in
certain situations be approximated by a direct pair interaction. Let us also mention further works that focus on related
models \cite{LMS2021, LP2019, LP2020}, a partial limit \cite{CCFO2019, CF2018, CFO2019, CFO2020} and
the strong coupling limit of the polaron \cite{FRS2021, FG2017, FS2014, G2017, LMRSS2021, LRSS2019, M2021}.

In this paper we take a different
route, drawing inspiration by a series of works by several authors
\cite{Boccato:2016, brennecke:2017, Chong2016, GV1979a, GV1979b,Ginbre_Velo_expansion2,Ginbre_Velo_expansion1,
  GrillakisMachedon:2013, GM2017, GMM2010, GMM2011, Kuz2017, Lewin:2015a,
  mpp, nam:2016, nam:2015, namnap_review, namnap_low_dim, soffer, RS2009} on systems
of many nonrelativistic bosons in direct pair interaction. In particular, we
follow the route from \cite{bossmann} (see also \cite{BPS2021} for a related
result in the static setting). We construct a Bogoliubov theory, and next-order
corrections, for the Nelson Hamiltonian in the mean-field/semiclassical
regime described above. In addition to the inherent interest of building up
such a theory for this model, this also allows to strengthen the previously
available propagation of chaos results from weak-* to strong convergence, at
least for a suitable class of initial states: the wave functions converge, at
any time and with an explicit rate of convergence, in Hilbert space norm and
not only through the expectation of suitable observables (of course, provided
that strong convergence holds at a given initial time).
Moreover, we provide a series expansion that approximates the two-parameter group generated by the Nelson Hamiltonian on a subspace of the excitation Fock space in the strong sense to arbitrary precision.
As an application we show that the time evolution of the first and second-order approximation of one-particle and one-field boson reduced density matrices are determined by a self-contained system of PDEs.

The article is organized as follows. In the rest of this section we
introduce the main notations and mathematical definitions used throughout the
paper, providing a more detailed introduction to the problem at hand. In Section \ref{sec:main-results}, we state the main convergence results of
this paper, viz., Theorems \ref{thm_Bog}, \ref{thm_higher_orders}, and
\ref{thm_higher_orders_unitary_group}. In Section
\ref{sec:discussion-results} we discuss the usefulness of our results by explaining how Bogoliubov theory can be used for computations, and by deriving the next-order PDEs for the reduced density matrices. Section \ref{sec_Bog_eq} discusses the Bogoliubov dynamics,
and its well-posedness, following mainly \cite{falconi, Lewin:2015a, nam:2016}. Finally, in Section \ref{sec:proofs} we provide the technical
details that lead to the proof of Theorems \ref{thm_Bog}, \ref{thm_higher_orders}, and \ref{thm_higher_orders_unitary_group}.

We assume that the reader is familiar with the basic objects of Fock spaces such as creation and annihilation operators, the number operator, and sectors with a fixed
number of particles. For an introduction to the topic we refer to \cite{DG2013}.

\subsection{Basic definitions and notations}
\label{sec:basic-defin-notat}

The Nelson model we are considering is set up as follows. Let us consider a
fixed number $N$ of nonrelativistic bosons, coupled to a quantized scalar
field with an ultraviolet-regular interaction. The Hilbert space of such a
theory is
\begin{equation}
\Hilbert_N = (L^2(\mathbb{R}^3))^{\otimes_s N} \otimes \Fock,
\end{equation}
where $\Fock$ is the bosonic Fock space over $L^2(\mathbb{R}^3)$, and $\otimes_s$ is the symmetric tensor product.
Any wave function $\Psi_N(t)$, where we have made explicit
the dependence on the number of nonrelativistic particles $N$ that we are going to choose very large, evolves with the Schr\"odinger equation
\begin{equation}\label{Schr_eq}
i\partial_t \Psi_N(t) = H_N^{\mathrm{Nelson}} \Psi_N(t),
\end{equation}
generated by the cutoff  Nelson Hamiltonian with mean-field scaling,
\begin{equation}
H_N^{\mathrm{Nelson}} = - \sum_{j=1}^N \Delta_j + \int dk\, \omega(k) a^{*}(k) a(k) + N^{-1/2} \sum_{j=1}^N \widehat{\Phi}(x_j).
\end{equation}
Here, $a^*$ and $a$ are the bosonic creation and annihilation operators satisfying the canonical commutation relations
\begin{equation}
[a(k),a^*(\ell)] = \delta(k-\ell), ~~ [a(k),a(\ell)] = 0 = [a^*(k),a^*(\ell)],
\end{equation}
and the field operator is defined as
\begin{equation}
\widehat{\Phi}(x) = \int dk\, \eta(k)e^{- 2\pi  ikx} \Big( a^*(k) + a(-k) \Big)
\end{equation}
with
\begin{equation}
\eta(k) = \frac{g(k)}{\sqrt{2\omega(k)}},
\end{equation}
for some real cutoff function $g$, satisfying $g(k) = g(-k)$\footnote{The choice of a real cutoff
	function $g$ is taken only to simplify some formulas in the
	presentation. The generalization to an arbitrary complex $g$ is straightforward.} and $\eta \in L^2$, with dispersion
$\omega(k) = \sqrt{k^2 + m^2}$, $m\geq 0$. We also assume $\frac{\eta}{\sqrt{\omega}} \in L^2$, which makes $H_N^{\mathrm{Nelson}}$ self-adjoint on $D(H_N^{\mathrm{Nelson},0})$, where $H_N^{\mathrm{Nelson},0} = - \sum_{j=1}^N \Delta_j + \int dk\, \omega(k) a^{*}(k) a(k)$ is the non-interacting part of the Nelson Hamiltonian.

In order to approximate the time evolution of the Nelson model we introduce the classical field
\begin{equation}
\label{Phi_eq}
\Phi(t,x) = \int dk\, \eta(k) e^{-2\pi ikx} \Big( \overline{\alpha(t,k)} + \alpha(t,-k)\Big)
\end{equation}
and the effective equations
\begin{subequations}
	\begin{align}
	\label{varphi_eq2} 
	i \partial_tu(t) &= \big( -\Delta + \Phi(t,\cdot) \big) u(t),\\
	\label{alpha_eq2} i \partial_t \alpha(t) &= \omega \alpha(t) + \eta \widehat{|u(t)|^2},
	\end{align}
\end{subequations}
with $u(t), \alpha(t) \in L^2(\RRR^3)$ for all $t$ (and where $\widehat{|u(t)|^2}$ denotes the Fourier transform of $|u(t)|^2$). Here, $\alpha$ and $\bar{\alpha}$ should be considered as the classical counterparts of $a$ and $a^{*}$. These equations are called Schr\"odinger--Klein--Gordon equations, since taking a second time derivative of Equation~\eqref{alpha_eq2} yields
\begin{equation}
(\partial_t^2 - \Delta + m^2 ) \Phi(t) = - \widecheck{g^2} * \lvert u(t)  \rvert_{}^2 \;,
\end{equation}
a Klein--Gordon equation with source term ($\widecheck{g^2}$ denotes the inverse Fourier transform of $g^2$).

Given a solution
$\bigl(u(t),\alpha(t)\bigr)$ of the Schr\"odinger--Klein--Gordon equations, let us
define the wave function $\varphi(t)$ by
\begin{equation}
  \label{eq:9}
  \varphi(t,x)= e^{i \mu(t)}u(t,x)\;,
\end{equation}
with the time-dependent phase $\mu(t)$ defined by
\begin{equation}
\mu(t) := \frac{1}{2} \int dx\, \Phi(t,x) |u(t,x)|^2\;.
\end{equation}

The Schr\"odinger--Klein--Gordon system is globally well-posed in several
function spaces. In this work, we rely on the following well-posedness
lemma. For $\ell\in\mathbb{N}$, let $H^{\ell}(\mathbb{R}^3)$ be the Sobolev space of order $\ell$ and
$L_{\ell}^2(\mathbb{R}^3)$ be a weighted $L^2$-space with norm $\| \alpha \|_{L_{\ell}^2(\mathbb{R}^3)} =
\big\| \left( 1 + | \cdot |^2 \right)^{\ell/2}
\big\|_{L^2(\mathbb{R}^3)}$
\begin{lemma}
\label{lemma: solution theory}
Let $(u_0, \alpha_0) \in H^2(\mathbb{R}^3) \times L_1^2(\mathbb{R}^3)$. Then there is a continuous map $t\mapsto
(u(t), \alpha(t))$ from $\RRR$ to $H^2(\mathbb{R}^3) \oplus L_1^2(\mathbb{R}^3)$ that satisfies
\eqref{varphi_eq2}-\eqref{alpha_eq2} with initial condition $(u(t), \alpha(t)) |_{t=0} = (u_0, \alpha_0)$.
\end{lemma}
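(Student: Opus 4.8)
The plan is to pass to the Duhamel (mild) formulation, solve it locally by a contraction argument, and then upgrade the local solution to a global one by means of a conservation law. The two free propagators are benign in the relevant spaces: the free Schrödinger group $e^{it\Delta}$ and the free field group $e^{-i\omega t}$ are Fourier multipliers of unit modulus, hence isometric on $H^2(\RRR^3)$ and on $L_1^2(\RRR^3)$ respectively (the latter because $e^{-i\omega(k)t}$ is a pointwise phase and leaves $(1+\lvert k\rvert^2)^{1/2}\lvert\alpha(k)\rvert$ unchanged). I would therefore look for a fixed point, in $C([-T,T];H^2\oplus L_1^2)$, of the map $\mathcal{T}$ whose two components are
\begin{align*}
\mathcal{T}(u,\alpha)_1(t) &= e^{it\Delta}u_0 - i\int_0^t e^{i(t-s)\Delta}\,\Phi[\alpha(s)]\,u(s)\,\d s,\\
\mathcal{T}(u,\alpha)_2(t) &= e^{-i\omega t}\alpha_0 - i\int_0^t e^{-i\omega(t-s)}\,\eta\,\widehat{\lvert u(s)\rvert^2}\,\d s,
\end{align*}
where $\Phi[\alpha]$ denotes the classical field \eqref{Phi_eq} built from $\alpha$, and the free evolutions solve exactly the linear parts of \eqref{varphi_eq2}--\eqref{alpha_eq2}.

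For the contraction two multiplication estimates are needed. First, $H^2(\RRR^3)$ is a Banach algebra in three dimensions, so $\norm{\Phi[\alpha]\,u}_{H^2}\leq C\norm{\Phi[\alpha]}_{H^2}\norm{u}_{H^2}$, while a direct computation in Fourier space from \eqref{Phi_eq} gives $\norm{\Phi[\alpha]}_{H^2}\leq C\norm{\alpha}_{L_1^2}$ as soon as $(1+\lvert\cdot\rvert^2)^{1/2}\eta\in L^\infty$. Second, since $\norm{\widehat{\lvert u\rvert^2}}_{L^\infty}\leq\norm{\lvert u\rvert^2}_{L^1}=\norm{u}_{L^2}^2$, one obtains $\norm{\eta\,\widehat{\lvert u\rvert^2}}_{L_1^2}\leq C\norm{u}_{L^2}^2$ as soon as $(1+\lvert\cdot\rvert^2)^{1/2}\eta\in L^2$. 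Both decay conditions on $\eta$ follow from the standing assumptions on the cutoff $g$. With these bounds and their bilinear analogues for differences, $\mathcal{T}$ maps a ball of $C([-T,T];H^2\oplus L_1^2)$ into itself and is a contraction for $T$ small depending only on the radius; strong continuity of the two unitary groups then yields the asserted continuity $t\mapsto(u(t),\alpha(t))\in H^2\oplus L_1^2$, and the standard blow-up alternative produces a unique maximal solution.

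The globalization I would drive entirely by conservation of mass. Since $g$ is real and even, $\Phi(t,\cdot)$ is real, and being in $H^2\subset L^\infty$ it is a bounded perturbation of $-\Delta$, so that $-\Delta+\Phi(t,\cdot)$ is symmetric and $\frac{\d}{\d t}\norm{u}_{L^2}^2 = 2\Im\scp{u}{(-\Delta+\Phi(t,\cdot))u}=0$; hence $\norm{u(t)}_{L^2}=\norm{u_0}_{L^2}$ throughout the maximal interval. Inserting this into the second component of $\mathcal{T}$ and using the isometry of $e^{-i\omega t}$ on $L_1^2$ gives the linear-in-time bound $\norm{\alpha(t)}_{L_1^2}\leq\norm{\alpha_0}_{L_1^2}+C\lvert t\rvert\,\norm{u_0}_{L^2}^2$, so $\norm{\Phi(t,\cdot)}_{H^2}\leq C\norm{\alpha(t)}_{L_1^2}$ remains bounded on any finite interval. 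Feeding this back into the first component and applying Grönwall yields $\norm{u(t)}_{H^2}\leq\norm{u_0}_{H^2}\exp\!\big(\int_0^{\lvert t\rvert} C\norm{\alpha(s)}_{L_1^2}\,\d s\big)$, whence the full $H^2\oplus L_1^2$ norm cannot blow up in finite time and the continuation criterion forces the maximal interval to be all of $\RRR$.

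The main obstacle is precisely this global extension: local existence is routine once the two product estimates are in place, but a priori an uncontrolled growth of $\norm{\alpha}_{L_1^2}$ could feed into the potential $\Phi$ and destabilize the $u$-equation. What saves the argument is the one-directional structure of the a priori control: the source $\eta\,\widehat{\lvert u\rvert^2}$ of the field equation is dominated by the conserved mass $\norm{u}_{L^2}$ alone, independently of any higher norm, so $\norm{\alpha}_{L_1^2}$ grows at most linearly and the $u$-equation closes as a \emph{linear} Grönwall inequality rather than a genuinely nonlinear one. The only point requiring extra care is the massless case $m=0$, where $\omega(k)=\lvert k\rvert$ degenerates at the origin; there the weights that actually appear are the nondegenerate $(1+\lvert k\rvert^2)^{1/2}$, and the low-momentum part of $\eta\,\widehat{\lvert u\rvert^2}$ is handled using the assumption $\eta/\sqrt{\omega}\in L^2$ together with the boundedness of $\widehat{\lvert u\rvert^2}$ near $k=0$.
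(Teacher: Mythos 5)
The paper itself does not prove this lemma: it defers to \cite[Appendix B]{LP2019}, where the statement is established for the sharp cutoff $g=\id_{|k|\le\Lambda}$, and merely asserts that the argument extends to a larger class of cutoffs. Your proof --- Duhamel formulation, contraction in $C([-T,T];H^2\oplus L_1^2)$ using that both free groups are isometries there, conservation of mass to get linear growth of $\norm{\alpha(t)}_{L_1^2}$, and a linear Gr\"onwall closure for $\norm{u(t)}_{H^2}$ --- is the standard route and structurally the same as the cited one; the one-directional control you emphasize (the field source is dominated by the conserved $L^2$ mass of $u$ alone) is indeed the key to globalization.

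There is, however, a concrete flaw in your first product estimate. You bound $\norm{\Phi[\alpha]\,u}_{H^2}$ via the Banach algebra property of $H^2(\RRR^3)$ together with $\norm{\Phi[\alpha]}_{H^2}\le C\norm{\alpha}_{L_1^2}$, and the latter requires $(1+|\cdot|^2)^{1/2}\eta\in L^\infty$. This fails in precisely the massless case you claim to handle at the end: for $m=0$ one has $\eta=g/\sqrt{2|k|}\sim|k|^{-1/2}$ near the origin, so $\eta\notin L^\infty$; worse, $\hat\Phi\sim\eta\alpha$ need not even lie in $L^2$ for general $\alpha\in L_1^2$ (take $\alpha(k)\sim|k|^{-1}$ near $k=0$, which is in $L_1^2(\RRR^3)$ but gives $\int|\eta\alpha|^2\sim\int_0^1 r^{-1}\,dr=\infty$), so $\Phi[\alpha]\notin H^2$ and the algebra estimate is unusable. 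The repair is to measure $\Phi$ in $W^{2,\infty}$ instead: $\norm{\Phi[\alpha]}_{W^{2,\infty}}\lesssim\norm{(1+|k|^2)\,\eta\,\alpha}_{L^1}\le\norm{\eta}_{L_1^2}\norm{\alpha}_{L_1^2}$ by Cauchy--Schwarz, and $W^{2,\infty}$ functions multiply $H^2$ by the Leibniz rule; this works uniformly in $m\ge0$. Note also that the resulting hypothesis $\eta\in L_1^2$ (which you likewise need for the field source term) does \emph{not} follow from the paper's standing assumptions $\eta,\ \eta/\sqrt{\omega}\in L^2$, contrary to your claim; it is an additional decay requirement on $g$, satisfied for compactly supported cutoffs and implicitly part of the ``larger class of cutoff functions'' the paper alludes to.
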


In \cite[Appendix B ($N= \delta= 1$)]{LP2019}, Lemma \ref{lemma: solution theory}
was proved for $g(k)=
\id_{|k| \leq \Lambda}(k)$ with $\Lambda >0$. The proof, however, can easily be extended to
a larger class of cutoff functions. We remark that a similar result was shown
in \cite{falconi}, and we refer the interested reader to \cite{ CHT2018,
  pecher} for a dedicated analysis of well-posedness for the
Schr\"odinger--Klein--Gordon system without ultraviolet
cutoff.

For convenience, let us denote
\begin{align}
h(t) =  -\Delta + \Phi(t,\cdot) - \mu(t),
\end{align}
and remark that given any solution $\bigl(u(t),\alpha(t)\bigr)$ of
\eqref{varphi_eq2}-\eqref{alpha_eq2}, the couple $\bigl(\varphi(t),\alpha(t)\bigr)$ is a
solution of
\begin{subequations}
\begin{align}
\label{varphi_eq} i \partial_t\varphi(t) &=  h(t)\varphi(t) \;,\\
\label{alpha_eq} i \partial_t \alpha(t) &= \omega \alpha(t) + \eta \widehat{|\varphi(t)|^2}\; .
\end{align}
\end{subequations}
Moreover, let us define the Weyl operator
\begin{equation}
W(f) = \exp \bigg( \int dk\, \Big( f(k)a^{*}(k) - \overline{f(k)} a(k) \Big) \bigg)
\end{equation}
whose action on the vacuum of $\mathcal{F}$ creates a coherent state with mean particle number $\| f \|_{L^2(\mathbb{R}^3)}^2$ (see, e.g., \cite{BR1981}  for a more detailed introduction). 

In this work, we are interested in the evolution of a Bose--Einstein condensate of particles with initial condensate wave function $\varphi_0$ and a coherent state $W(\sqrt{N}\alpha_0)\lvert\Omega\rangle$ of field bosons with mean particle number $N \| \alpha_0 \|_{L^2(\mathbb{R}^3)}^2$ (here, $\lvert\Omega\rangle$ denotes the vacuum vector). We will prove the persistence of the condensate and the coherent structure of the field during the time evolution and show that they are described by $(\varphi(t), \alpha(t))$ evolving according to \eqref{varphi_eq} and \eqref{alpha_eq} with initial condition $(\varphi(t), \alpha(t)) \big\rvert_{t=0} = (\varphi_0, \alpha_0)$. Moreover, we will give an explicit description of the fluctuations around the Schr\"odinger--Klein--Gordon equations.
For this purpose it is convenient to embed the Hilbert space of the particles into a second Fock space, to factor out the condensate as well as the coherent state and to look at the corresponding excitation Fock spaces (for a similar strategy without second quantization see \cite{pickl, mpp}).

The excitation Fock space of the particles is given by
\begin{align}\label{eq: definition excitation Fock space particles}
\Fock_{ b }  = \bigoplus_{k=0}^\infty \mathcal F_{  b}^{(k)}  \quad \text{with}\quad \Fock_{  b}^{(k)} =  ( \varphi(t)^\perp)^{\otimes_s k}.
\end{align}
It describes bosonic particles, each with a wave function
orthogonal to the reference state $\varphi(t)$. The excitation space for the
phonons is defined in a slightly different fashion as
\begin{equation}
\label{eq: definition excitation Fock space field bosons} 
  \Fock_{a}  =  W^* \big(\sqrt{N}\alpha(t)\big) \Fock\; .
\end{equation}
This space is a unitary ``rotation'' of the original space; however, it is
the coherent state $W\big(\sqrt{N}\alpha(t)\big)\lvert\Omega\rangle$, $\lvert\Omega\rangle$ being the Fock vacuum
vector, that now plays the role of a new vacuum $\lvert\Omega_a\rangle$ for $\Fock_{a}$. The
combined excitation space is then given by the double Fock space
\begin{align}
\label{eq:double Fock space}
\mathcal G = \Fock_{b}  \otimes  \Fock_{a}.
\end{align}
Let us remark that, even though we omit this in our notation, both spaces
$\Fock_{ b }$ and $\Fock_a$ depend on time. Let us denote the creation and annihilation operators on $\mathcal F_b $ by
$b^*(x)$ and $b(x)$, and on $\mathcal F_a$, as before, by $a^{*}(k)$ and $a(k)$. We call the respective number operators $\Number_b$ and
$\Number_a$. We also introduce the operator that counts the total number of excitations,
\begin{align}
\Number = \Number_a  + \Number_b.
\end{align}

We now consider the decomposition
\begin{equation}
\label{decomp_Psi}
\Psi_N(t) = W\big(\sqrt{N}\alpha(t)\big) \sum_{k=0}^N \varphi(t)^{\otimes (N-k)} \otimes_s \chi_{\le N}^{(k)}(t) 
\end{equation}
where $\chi_{\le N}^{(k)}(t) \in \Fock_b^{(k)} \otimes \Fock_a $. For given $(\varphi(t),\alpha(t))$, this
establishes a unitary map\footnote{The unitary map and some of its properties are discussed in greater detail in Appendix \ref{section: more details on the excitation Fock space}. } between $\mathcal H_N$ and the $N$-particle excitation
space
\begin{align}\label{eq: definition k-particles excitation space}
\mathcal G_{\leq N} := \bigg( \bigoplus_{k=0}^N \mathcal F_{  b}^{(k)}  \bigg) \otimes \mathcal F_a .
\end{align}
The inverse of \eqref{decomp_Psi} is
\begin{equation}\label{psi_to_excitations}
\chi_{\le N}^{(k)}(t) = \binom{N}{k}^{1/2} \prod_{i=1}^k q_i(t) \scp{ \varphi(t)^{\otimes (N-k)} }{ W^{*}\big(\sqrt{N}\alpha(t)\big) \Psi_N(t)}_{L^2(\RRR^{d(N-k)})},\quad k = 0, 1,\ldots, N,
\end{equation}
where we take a partial inner product w.r.t. the coordinates $x_{k+1},\ldots,
x_N$, and $q_i(t) = 1 - p_i(t)$ with $p_i(t) = | \varphi(t) \rangle \langle \varphi(t) |_i$, that is,
the projector onto the state $\varphi(t)$ in the $x_i$ coordinate. We can thus
equally express the Schr\"odinger equation \eqref{Schr_eq} as
\begin{equation}\label{Schr_eq_Fock}
i\partial_t \chi_{\le N}(t) = H^{\leq N}(t) \chi_{\le N }(t),
\end{equation}
where
\begin{equation}\label{chi_in_Fock}
\chi_{\le N}(t)  = \big (\chi_{\le N}^{(k)}(t) \big)_{k=0}^{N} \, \in \Gock_{\le N}.
\end{equation}
The Hamiltonian $H^{\leq N}(t)$ can be written (see Appendix \ref{section: more details on the excitation Fock space}) as the restriction
of a Hamiltonian $H(t)$ (defined on $\Gock$) to $\mathcal G_{\leq N} $, i.e., $H^{\leq N}(t) = H(t)
|_{\Gock_{\leq N}}$, with\footnote{For one-body operators on $L^2(\mathbb{R}^3)$ with kernel $A(x,y)$ we use the usual shorthand notation $\int dx \, b^{*}(x) A b(x) = \int \int dx \, dy \, b^{*}(x) A(x,y) b(y)$.}
\begin{align}
\label{full_H}
\begin{split}
H(t) &= \int dx\, b^{*}(x) h(t) b(x) + \int dk\, \omega(k) a^{*}(k) a(k) \\
&\quad + \int dx \int dk\,  K(t,k,x) \big( a^{*}(k) + a(-k)\big) b^*(x) \big[ 1- N^{-1}\Number_b \big]_+^{1/2} + \text{h.c.}  \\
&\quad + N^{-1/2} \int dx\, b^{*}(x) \Big( q(t) \widehat{\Phi} q(t) - \scp{\varphi(t)}{\widehat{\Phi} \varphi(t)} \Big) b(x),
\end{split}
\end{align}
where $[x]_+$ denotes the positive part of $x$, $\text{h.c.}$ denotes the Hermitian conjugate of the preceding term, and $q(t)$ is the operator with
integral kernel
\begin{align}\label{Kernel_q}
q(t,x,y) = \delta(x-y) - \varphi(t,x) \overline{\varphi(t,y)} .
\end{align}
Moreover,
\begin{align}
K(t,k,x) &= \int dy \, q(t,x, y) \widetilde{K}(t,k,y) \quad \text{with}\quad
 \widetilde{K}(t,k,x) = \eta(k) e^{-2\pi ikx} \varphi(t,x),
\end{align}
and we denote by $K(t)$ the operator with operator kernel $K(t,k,x)$, i.e., $K(t) = \widetilde K(t)\overline{q(t)}$ where $\widetilde K(t)$ has kernel $\widetilde
K(t,k,x)$ and $q(t)$ has the kernel given by \eqref{Kernel_q}. Note that if an operator $A$ has integral kernel $A(x,y)$, we write $\overline{A}$ for the operator with integral kernel $\overline{A(x,y)}$.

The Bogoliubov approximation is to disregard all terms with more than two
$a^{*}$, $a$, $b^{*}$ or $b$ operators in \eqref{full_H}. This leads us to define the
Bogoliubov Hamiltonian
\begin{align}\label{H_Bog}
\begin{split}
H_0 (t) &= \int dx\, b^{*}(x) h(t) b(x) + \int dk\, \omega(k) a^{*}(k) a(k) \\
&\quad + \int dx \int dk\,  K(t,k,x) \big( a^{*}(k) + a(-k) \big) b^*(x) + \text{h.c.} \, .
\end{split}
\end{align}
While the Hamiltonian $H(t)$ maps $\Gock_{\leq N}$ to itself (because of the appearance of the square root in the second line of \eqref{full_H}) this does not hold for $H_0(t)$. The corresponding time evolution, usually referred to as Bogoliubov equation, must therefore be defined on the double  Fock space \eqref{eq:double Fock space}. It reads
\begin{equation}\label{Bog_Schr}
i\partial_t \chi_0(t) = H_0 (t) \chi_0 (t) ,
\end{equation}
where $\chi_0(t) \in \mathcal G$. Its well-posedness is discussed in
Section~\ref{sec_Bog_eq}.

Let us remark that for states $\chi$ in $\mathcal G$ and $\mathcal G_{\le N}$ we shall always use the notation $\chi^{(k)}$ to indicate the component in the $k$ particle sector w.r.t. the particle excitations, i.e.,
\begin{align}
\chi^{(k)} \in \mathcal F_b^{(k)} \otimes  \mathcal F_a.
\end{align}
In particular, every $\chi\in \Gock$ corresponds to a sequence $(\chi^{(k)})_{k\ge 0}$.

\section{Main Results}
\label{sec:main-results}

In this section we state the main results of this paper. We start with
Bogoliubov theory, and the first result on norm convergence. Then, we focus
on higher-order corrections, and the refinement of the rate of convergence for initial states that admit a power series expansion in the parameter $N\gg 1$. As our last result, we provide a similar statement for the unitary propagator.

\subsection{Bogoliubov Theory}

Our first result shows that Bogoliubov theory approximates the microscopic dynamics well, up to an error of order $N^{-1/2}$.

\begin{theorem}\label{thm_Bog}
Let $\Psi_N(t)$ be the solution to the Schr\"odinger equation \eqref{Schr_eq} with initial condition
\begin{equation}\label{thm_ini_trafo}
\Psi_N(0) = W\big(\sqrt{N}\alpha(0)\big) \sum_{k=0}^N \varphi(0)^{\otimes (N-k)} \otimes_s \chi^{(k)}_0(0)
\in \Hilbert_N,
\end{equation}
where $(\varphi(0),\alpha(0)) \in H^2(\mathbb R^3) \times L^2_1(\mathbb R^3)$ and $\chi_0(0) \in \mathcal G$ satisfy $ \norm{ \varphi(0) } = 1 $ and $\norm{\chi_0}=1$. We assume that there is a constant $C>0$ such that
\begin{equation}\label{thm_moments_assumption}
\big\| ( \Number + 1 )^{3/2} \chi_0(0) \big\| \leq C.
\end{equation}
Then there are constants $C_1,C_2 > 0$ such that
\begin{equation}\label{main_result}
\norm{\Psi_N(t) - \Psi_{N}^{(0)}(t)} \leq C_1 e^{C_2 t} N^{-1/2},
\end{equation}
where
\begin{equation}\label{def Psi_N^(0)(t)}
\Psi_N^{(0)}(t) = W\big(\sqrt{N}\alpha(t)\big) \sum_{k=0}^N \varphi(t)^{\otimes (N-k)} \otimes_s \chi^{(k)}_0(t),
\end{equation}
with $\varphi(t)$, $\alpha(t)$, and $\chi_0(t)$ solutions to Equations~\eqref{varphi_eq}, \eqref{alpha_eq}, and \eqref{Bog_Schr}, respectively.
\end{theorem}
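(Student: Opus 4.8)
The plan is to transport the whole problem to the excitation Fock space and run a Grönwall argument for the norm of the difference vector, the real work being the propagation of number‑operator moments along the full truncated dynamics.

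\textit{Reduction.} For each fixed $t$ the decomposition \eqref{decomp_Psi} is a unitary map $U_t$ from $\mathcal{G}_{\leq N}$ onto $\Hilbert_N$ depending only on $(\varphi(t),\alpha(t))$, and by \eqref{def Psi_N^(0)(t)} the vector $\Psi_N^{(0)}(t)$ is the image under this \emph{same} map of the truncation $P\chi_0(t):=(\chi_0^{(k)}(t))_{k=0}^N$, where $P$ is the orthogonal projection of $\Gock$ onto $\mathcal{G}_{\leq N}$. Hence, with $D(t):=\chi_{\leq N}(t)-\chi_0(t)\in\Gock$,
\[
\norm{\Psi_N(t)-\Psi_N^{(0)}(t)}=\norm{\chi_{\leq N}(t)-P\chi_0(t)}=\norm{P D(t)}\leq\norm{D(t)}.
\]
The initial datum in \eqref{thm_ini_trafo} gives $\chi_{\leq N}(0)=P\chi_0(0)$, so $D(0)=-(\id-P)\chi_0(0)$ is the high‑sector tail of $\chi_0(0)$; since $\Number\geq\Number_b$, the assumption \eqref{thm_moments_assumption} yields $\norm{D(0)}\leq(N+1)^{-3/2}\norm{(\Number+1)^{3/2}\chi_0(0)}\leq C N^{-3/2}$.

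\textit{Energy estimate.} Because $H(t)$ preserves $\mathcal{G}_{\leq N}$ we have $H^{\leq N}(t)\chi_{\leq N}(t)=H(t)\chi_{\leq N}(t)$, so the difference obeys $i\partial_t D(t)=H_0(t)D(t)+R(t)\chi_{\leq N}(t)$ with $R(t):=H(t)-H_0(t)$. Using self‑adjointness of $H_0(t)$ on $\Gock$ (well‑posedness from Section \ref{sec_Bog_eq}), the $H_0(t)D(t)$ contribution to $\tfrac{d}{dt}\norm{D(t)}^2$ cancels, leaving $\tfrac{d}{dt}\norm{D(t)}^2=2\,\Im\,\scp{D(t)}{R(t)\chi_{\leq N}(t)}$ and therefore
\[
\norm{D(t)}\leq\norm{D(0)}+\int_0^t\norm{R(s)\chi_{\leq N}(s)}\,\d s.
\]

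\textit{Bounding $R\chi_{\leq N}$.} The difference $R(t)=H(t)-H_0(t)$ consists of just two pieces: the square‑root defect $\int\d x\,\d k\,K(t,k,x)(a^*(k)+a(-k))b^*(x)\{[1-N^{-1}\Number_b]_+^{1/2}-1\}+\hc$ and the cubic term $N^{-1/2}\int\d x\,b^*(x)\big(q(t)\widehat{\Phi}q(t)-\scp{\varphi(t)}{\widehat{\Phi}\varphi(t)}\big)b(x)$. For the first piece the crucial point is that on $\mathcal{G}_{\leq N}$ one has $\Number_b\leq N$, so $\big|[1-N^{-1}\Number_b]_+^{1/2}-1\big|\leq N^{-1}\Number_b\leq N^{-1/2}(\Number+1)^{1/2}$; together with the pair‑operator bound $\norm{\int K\,a^*b^*\,\psi}\lesssim\norm{K(t)}_{\HS}\norm{(\Number+1)\psi}$ this yields a contribution $\lesssim N^{-1/2}\norm{(\Number+1)^{3/2}\chi_{\leq N}(s)}$. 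The cubic term, being $N^{-1/2}$ times a product of one $b^*$, one field operator and one $b$, is bounded the same way, with prefactors controlled through $\norm{\varphi(s)}_{H^2}$ and $\norm{\alpha(s)}_{L^2_1}$ via Lemma \ref{lemma: solution theory}. Hence $\norm{R(s)\chi_{\leq N}(s)}\leq C(s)\,N^{-1/2}\norm{(\Number+1)^{3/2}\chi_{\leq N}(s)}$.

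\textit{Moment propagation and assembly.} It remains to control $\norm{(\Number+1)^{3/2}\chi_{\leq N}(t)}$ along the genuine dynamics. I would establish the form bound $\pm i[H^{\leq N}(t),(\Number+1)^3]\leq C(t)(\Number+1)^3$ on $\mathcal{G}_{\leq N}$: only the number‑changing terms contribute (the pair operators $a^*b^*$, $ab$ and the $\pm1$ cubic terms), while the square‑root factor, a bounded function of $\Number_b$, commutes with $(\Number+1)^3$ and may be dropped up to its ($\leq 1$) norm. Grönwall then gives $\norm{(\Number+1)^{3/2}\chi_{\leq N}(t)}\leq Ce^{C_2 t}$, and inserting this into the Duhamel bound together with $\norm{D(0)}\leq CN^{-3/2}$ produces $\norm{D(t)}\leq C_1 e^{C_2 t}N^{-1/2}$, which is \eqref{main_result}. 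The main obstacle is exactly this moment propagation: the commutator estimates must be done term by term for the field‑coupling operators, controlling the form factors $K(t)$ and $\eta$ against powers of $\Number$ \emph{uniformly in} $N$, and the formal differentiations of $t\mapsto\scp{\chi_{\leq N}(t)}{(\Number+1)^3\chi_{\leq N}(t)}$ must be justified (e.g. by first cutting off $\Number$ and removing the cutoff afterwards). Once these moment bounds are in hand, the energy estimate and the bound on $R\chi_{\leq N}$ are comparatively routine.
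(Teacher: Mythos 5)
Your argument is correct and reaches the same Gr\"onwall-type conclusion, but it is organized differently from the paper's proof, and the difference matters for where the real work lands. The paper writes the Duhamel error so that the defect operator acts on the \emph{approximate} solution: for $r=0$ the only surviving term after using self-adjointness of $H^{\leq N}(s)-H_0(s)$ is $\SCP{\chi^{\mathrm{rest}}_0(s)}{S_N^{(0)}(s)\chi_0(s)}$, so one only needs moments of $\Number$ along the Bogoliubov flow $U_0$, which are supplied by the bound \eqref{eq:4x} (itself proved for \emph{negative} powers of $\Number+1$ and then dualized). You instead split $i\partial_t D=H_0D+R\chi_{\leq N}$, which forces you to propagate $\norm{(\Number+1)^{3/2}\chi_{\leq N}(t)}$ along the \emph{full} dynamics $H^{\leq N}(t)$. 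This is true and provable, but it is genuinely the harder route, and it is also the reason your argument does not extend as cleanly to Theorem~\ref{thm_higher_orders}, where the paper's arrangement (remainders acting on the $\chi_\ell$, all generated by $U_0$) pays off. Note that you could have obtained the paper's version for free by writing $i\partial_tD=HD+R\chi_0$ instead, since $H$ is equally symmetric.

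One concrete warning on the step you yourself flag as the main obstacle: the form bound $\pm i[H^{\leq N}(t),(\Number+1)^{3}]\leq C(\Number+1)^{3}$ is correct, but a naive global Cauchy--Schwarz on the pair-creation commutator $\int K\,a^*b^*\bigl((\Number+3)^{3}-(\Number+1)^{3}\bigr)$ produces $\norm{(\Number+1)^{2}\phi}\,\norm{(\Number+1)\phi}$, which dominates $\norm{(\Number+1)^{3/2}\phi}^{2}$ rather than being dominated by it, so the Gr\"onwall does not close that way. You need either a sector-wise Cauchy--Schwarz (pairing the $m$ and $m+2$ components of $\phi$ and using $\norm{K}_{\HS}$ together with $(\Number_a+1)^{1/2}(\Number_b+1)^{1/2}\leq\tfrac12(\Number+2)$ on each sector before summing), or the paper's device of first proving the estimate for $(\Number+1)^{-r}$, where the elementary inequality $((n+1)^{-r}-(n+3)^{-r})n\leq 2r(n+1)^{-r}$ closes the loop, and then passing to positive powers by unitarity and duality. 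Your reduction to the excitation space, the treatment of the initial tail $(\id-P)\chi_0(0)$, and the bound $\norm{R(s)\chi_{\leq N}(s)}\lesssim N^{-1/2}\norm{(\Number+1)^{3/2}\chi_{\leq N}(s)}$ (via $|\sqrt{[1-x]_+}-1|\leq x$ and $\Number_b\leq N$ on $\Gock_{\leq N}$) all match the paper's Lemmas \ref{lem_Ham_bounds} and \ref{Taylor_H_with_remainder} in substance.
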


\begin{remark} By density of $D(\mathcal N^{3/2} ) \cap \Gock$ in $\mathcal G$, the above statement extends to any $N$-independent state $\chi_0(0) \in \mathcal G$ if we omit the explicit rate of convergence in \eqref{main_result}. More precisely, for $\Psi_N(0)$ as in \eqref{thm_ini_trafo} with $\chi_0(0)$ normalized but not necessarily satisfying \eqref{thm_moments_assumption}, it follows that
\begin{align}
\lim_{N\to \infty} \norm{\Psi_N(t) - \Psi_{N}^{(0)}(t)}  = 0.
\end{align}
\end{remark}

\begin{remark} Since we choose $\chi_0(0) = (\chi_0^{(k)}(0))_{k\ge 0}$ normalized to one, this does not necessarily hold for $\Psi_N(0)$. However, it is easy to verify that $\norm{\Psi_N(0)} \to 1$ as $N\to \infty$. 
\end{remark}

\subsection{Higher-Order Corrections}

Formally, one can expand the Hamiltonian \eqref{full_H} by using the Taylor series $\sqrt{1-x} = \sum_{n=0}^{\infty} c_n x^n$ (see \eqref{def: coefficients cn} for the definition of $c_n$). This yields
\begin{align}
\begin{split}
H(t) &= \int dx\, b^{*}(x) h(t) b(x) + \int dk\, \omega(k) a^{*}(k) a(k) \\
&\quad + \sum_{n=0}^{\infty} N^{-n} c_n \int dx \int dk\,  
K(t,k,x) \big( a^{*}(k) + a(-k) \big)  b^*(x)   \Number_b^n + \hc  \\
&\quad + N^{-1/2} \int dx\, b^{*}(x) \Big( q(t) \widehat{\Phi}(x) q(t) - \scp{\varphi(t)}{\widehat{\Phi} \varphi(t)} \Big) b(x) \\
&= \sum_{\ell=0}^{\infty} N^{-\ell/2} H_\ell(t)
\end{split}
\end{align}
(see \eqref{def of H_1}-\eqref{def of H_2n+1}). The rigorous expansion with explicit remainder estimates is given in Lemma~\ref{Taylor_H_with_remainder}. In addition, we formally expand the wave function $\chi(t) \in \mathcal G$ in a power series in $N^{-1/2}$, i.e.,
\begin{equation}
\chi(t) = \sum_{\ell = 0}^{\infty} N^{-\ell/2} \chi_{\ell}(t), \quad \chi_\ell(t) \in \mathcal G.
\end{equation}
Later, we will only consider this power series truncated at some $r \in \NNN_0$, since we do not expect the series to converge. Then the Schr\"odinger equation
\begin{equation}
i \partial_t \chi(t) = H(t) \chi(t)
\end{equation}
leads in each order in $N^{-1/2}$ to the equations
\begin{align}\label{chil_equation}
i \partial_t \chi_{\ell}(t) = H_0(t)\chi_{\ell}(t) + \sum_{m=0}^{\ell-1} H_{\ell-m}(t) \chi_m(t).
\end{align}
Let us denote by $U_0(t,s)$ the unitary time evolution generated by the Bogoliubov Hamiltonian $H_0(t)$. Then Equation~\eqref{chil_equation} in integral form reads
\begin{align}
\chi_{\ell}(t) &= U_0(t,0)\chi_{\ell}(0) - i \sum_{m=0}^{\ell-1} \int_0^t \d s U_0(t,s) H_{\ell-m}(s) \chi_m(s) \nonumber \\
&= U_0 (t,0)\chi_{\ell}(0) - i \sum_{m=0}^{\ell-1} \int_0^t \d s \widetilde{H}_{\ell-m}(s,t) U_0(t,s) \chi_m(s),
\end{align}
where
\begin{equation}
\widetilde{H}_m (s,t) = U_0(t,s) H_m (s) U_0 (s,t).
\end{equation}
This iteration can be solved by a straightforward computation (see also \cite[Proposition~3.2]{bossmann}). We find
\begin{align}\label{chil_def}
\chi_{\ell}(t) &= U_ 0(t,0) \chi_{\ell}(0) + \sum_{m=0}^{\ell-1} \sum_{k=1}^{\ell-m} \sum_{\substack{\alpha \in \NNN^k \\ |\alpha| = \ell-m}} (-i)^k \int_{\Delta_k} \d s^{(k)} \prod_{i=1}^k \widetilde{H}_{\alpha_i}(s_i,t) U_0 (t,0) \chi_m(0),
\end{align}
where we abbreviated $\d s^{(k)} = \d s_1 \cdots \d s_k$, and $\Delta_k$ is the region $[0,t]^k$ with $s_{i+1} \leq s_i$ for all $i=1,\ldots,k-1$. Our main assumption on the initial data is the following.

\begin{assumption}\label{main_number_op_assumption}
Let $r\ge 1$ and assume that for each $\ell \in \{0,...,r\}$, the state $\chi_\ell(0) \in \mathcal G$ is in the domain of any power of the number operator $\mathcal N$ with uniform bounds as $N\to \infty$, that is, for all $\ell \in \{0,...,r\}$ and $n \in \NNN_0$ there are $C(\ell,n) > 0$ such that
\begin{align}\label{equation_assumption}
\norm{ (\mathcal N + 1 )^{n} \chi_\ell(0) } \leq C(\ell,n)
\end{align}
for all $N \geq 1$. Moreover, let $(\varphi(0),\alpha(0)) \in H^2(\mathbb R^3) \times L^2_1(\mathbb R^3)$.
\end{assumption}

Note that naturally one would consider $N$-independent coefficients $\chi_\ell(0)$, such that the uniformity in $N$ in \eqref{equation_assumption} is given; however, the following theorem remains true if the $\chi_\ell(0)$ depend on $N$ but satisfy \eqref{equation_assumption} uniformly in $N$. For mean-field bosons interacting via a class of two-body potentials, it was shown in \cite{BPS2021} that low-energy eigenstates for particles in a suitable trap can indeed be expanded into a series, with $N$-independent $\chi_\ell(0)$, and we expect that a similar result holds true for the Nelson model.

Our main result on the higher order corrections is the following.

\begin{theorem}\label{thm_higher_orders}
Let $\Psi_N(t)$ be the solution to the Schr\"odinger equation \eqref{Schr_eq} with initial condition $\Psi_N(0) \in \Hilbert_N$. Let $r \in \NNN_0$, and
\begin{equation}\label{thm_ini_state_higher_orders}
\Psi_N^{(r)}(0) = W\big(\sqrt{N}\alpha(0)\big) \sum_{k=0}^N \varphi(0)^{\otimes (N-k)} \otimes_s \bigg( \sum_{\ell = 0}^{r} N^{-\ell/2} \chi_{\ell}^{(k)}(0) \bigg),
\end{equation}
and let Assumption~\ref{main_number_op_assumption} hold. Then for all $\Psi_N(0)$ with
\begin{equation}
\norm{\Psi_N(0) - \Psi_N^{(r)}(0)} \leq C_r N^{-(r+1)/2}
\end{equation}
for some $C_r > 0$, there is a $\widetilde{C}_r > 0$ such that
\begin{equation}
\norm{\Psi_N(t) - \Psi_N^{(r)}(t)} \leq \widetilde{C}_r e^{\widetilde{C}_r t} N^{-(r+1)/2},
\end{equation}
where
\begin{equation}
\Psi_N^{(r)}(t) = W\big(\sqrt{N}\alpha(t)\big) \sum_{k=0}^N \varphi(t)^{\otimes (N-k)} \otimes_s \bigg( \sum_{\ell = 0}^{r} N^{-\ell/2} \chi_{\ell}^{(k)}(t) \bigg) ,
\end{equation}
with $(\varphi(t)$, $\alpha(t))$ satisfying \eqref{varphi_eq}-\eqref{alpha_eq} and $\chi_{\ell}(t)$ being defined as in Equation~\eqref{chil_def}.
\end{theorem}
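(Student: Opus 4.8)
The plan is to transport the whole estimate to the excitation Fock space, where the microscopic dynamics is unitary, and to reduce the theorem to a \emph{consistency} bound for the truncated series. For each fixed $t$ the map of Equation~\eqref{decomp_Psi} is unitary, and $\Psi_N^{(r)}(t)$ is precisely its image of $P_{\le N}\chi^{(r)}(t)$, where $\chi^{(r)}(t) = \sum_{\ell=0}^r N^{-\ell/2}\chi_\ell(t)\in\Gock$, $P_{\le N}$ denotes the orthogonal projection of $\Gock$ onto $\Gock_{\le N}$, and $\chi_{\le N}(t)$ is the image of $\Psi_N(t)$. Hence $\norm{\Psi_N(t) - \Psi_N^{(r)}(t)} = \norm{\chi_{\le N}(t) - P_{\le N}\chi^{(r)}(t)}$. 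Writing $U^{\le N}(t,s)$ for the unitary propagator of $H^{\le N}(s) = H(s)|_{\Gock_{\le N}}$, differentiating $U^{\le N}(t,s)P_{\le N}\chi^{(r)}(s)$ in $s$, and using $i\partial_s\chi^{(r)}(s) = G^{(r)}(s) := \sum_{\ell=0}^r N^{-\ell/2}\big(H_0(s)\chi_\ell(s) + \sum_{m=0}^{\ell-1}H_{\ell-m}(s)\chi_m(s)\big)$ from~\eqref{chil_equation}, I obtain the Duhamel identity
\[
\chi_{\le N}(t) - P_{\le N}\chi^{(r)}(t) = U^{\le N}(t,0)\big(\chi_{\le N}(0)-P_{\le N}\chi^{(r)}(0)\big) - i\int_0^t U^{\le N}(t,s)\,\mathcal{D}(s)\,\d s,
\]
with defect $\mathcal{D}(s) = H^{\le N}(s)P_{\le N}\chi^{(r)}(s) - P_{\le N}G^{(r)}(s)$. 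Unitarity of $U^{\le N}$ then yields $\norm{\chi_{\le N}(t) - P_{\le N}\chi^{(r)}(t)} \le \norm{\chi_{\le N}(0)-P_{\le N}\chi^{(r)}(0)} + \int_0^t \norm{\mathcal{D}(s)}\,\d s$. The initial term is $\le C_r N^{-(r+1)/2}$ by hypothesis and unitarity. Crucially, no Gronwall feedback on the difference is required: stability comes for free from unitarity of the true flow, so everything rests on a consistency estimate $\norm{\mathcal{D}(s)} = \Order(N^{-(r+1)/2})$.

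To control the defect I would first use that $H(s)$ maps $\Gock_{\le N}$ into itself, so $H^{\le N}(s)P_{\le N}\chi^{(r)}(s) = H(s)P_{\le N}\chi^{(r)}(s)$, and split
\[
\mathcal{D}(s) = H(s)(P_{\le N}-\id)\chi^{(r)}(s) + \big(H(s)\chi^{(r)}(s)-G^{(r)}(s)\big) + (\id - P_{\le N})G^{(r)}(s).
\]
The first and third terms project onto sectors with $\mathcal{N}_b > N$; once all number-operator moments of the $\chi_\ell(s)$ are known to be finite, these are smaller than any power of $N$ and thus negligible. The middle term is the genuine consistency error. Inserting the rigorous expansion $H(s) = \sum_{j=0}^{r}N^{-j/2}H_j(s) + N^{-(r+1)/2}R_r(s)$ from Lemma~\ref{Taylor_H_with_remainder} and collecting $H(s)\chi^{(r)}(s)$ by total order, the contribution of order $N^{-\ell/2}$ with $\ell\le r$ is exactly $\sum_{j=0}^\ell H_j(s)\chi_{\ell-j}(s)$, which coincides with the order-$\ell$ part of $G^{(r)}(s)$ — this is precisely how the hierarchy~\eqref{chil_equation} was designed. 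What survives is
\[
H(s)\chi^{(r)}(s)-G^{(r)}(s) = \sum_{\substack{0\le j,m\le r\\ j+m>r}} N^{-(j+m)/2}H_j(s)\chi_m(s) + N^{-(r+1)/2}R_r(s)\chi^{(r)}(s),
\]
a finite sum of terms, each carrying an explicit factor $N^{-(r+1)/2}$.

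It then remains to bound the finitely many $H_j(s)$ and the remainder $R_r(s)$ applied to the $\chi_m(s)$. Each $H_j(s)$ is a polynomial of fixed degree in $a^{\#},b^{\#}$ times a fixed power of $\mathcal{N}_b$, with coefficients built from $\eta$ and $(\varphi(s),\alpha(s))$, which are bounded on compact time intervals by Lemma~\ref{lemma: solution theory}; this gives $\norm{H_j(s)\psi} \le C(s)\norm{(\mathcal{N}+1)^{p_j}\psi}$ for a suitable $p_j$, and likewise $\norm{R_r(s)\psi}\le C(s)\norm{(\mathcal{N}+1)^{p}\psi}$ from the remainder estimate of Lemma~\ref{Taylor_H_with_remainder}. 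The decisive input is the a~priori moment propagation $\norm{(\mathcal{N}+1)^n\chi_\ell(s)} \le C(\ell,n)\,e^{c(\ell,n)s}$ for $0\le\ell\le r$ and all $n\in\NNN_0$, which I would derive from Assumption~\ref{main_number_op_assumption} together with the representation~\eqref{chil_def}: the Bogoliubov propagator $U_0$ and the conjugated operators $\widetilde{H}_{\alpha_i}$ propagate every power of $\mathcal{N}$ with at most exponential growth, via a Gronwall estimate for $\frac{\d}{\d s}\norm{(\mathcal{N}+1)^{n/2} U_0(s,0)\psi}^2$ using the form bound $\pm\, i\,[H_0(s),(\mathcal{N}+1)^n] \le C(n)\,(\mathcal{N}+1)^n$ (and the analogous relatively bounded commutators with the $H_j$). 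Combining these ingredients gives $\norm{\mathcal{D}(s)}\le \widetilde{C}_r\, e^{\widetilde{c}_r s}\,N^{-(r+1)/2}$; integrating in $s$ and adding the initial bound yields $\norm{\chi_{\le N}(t)-P_{\le N}\chi^{(r)}(t)} \le \widetilde{C}_r\, e^{\widetilde{C}_r t}\,N^{-(r+1)/2}$, which is the claim after undoing the unitary transformation.

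The main obstacle is the moment propagation of the previous paragraph. Because $H_0(s)$ contains the pairing terms $a^{*}b^{*}$ and $ab$ it does not commute with $\mathcal{N}$, so controlling high powers $(\mathcal{N}+1)^n$ requires commutator bounds that are uniform on compact time intervals, and the nested time-ordered products in~\eqref{chil_def} compound the number-operator growth at every order $\ell$. Establishing these commutator estimates for the field--particle coupling, and verifying that the non-analytic square root $[1-N^{-1}\mathcal{N}_b]_+^{1/2}$ does not spoil the remainder estimate of Lemma~\ref{Taylor_H_with_remainder} uniformly in $N$, is where the real work lies; the Duhamel/consistency skeleton above is otherwise routine.
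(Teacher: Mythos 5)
Your proposal is correct and is built on the same pillars as the paper's proof: passing to the excitation space, the Taylor expansion of $H(t)$ with the remainder bounds of Lemma~\ref{Taylor_H_with_remainder}, the algebraic consistency of the hierarchy \eqref{chil_equation}, and the propagation of all number-operator moments of the $\chi_\ell(t)$ (the paper's Lemma~\ref{lem_propagation_of_moments}, proved exactly as you sketch, via the bound \eqref{eq:4x} for $U_0$ and the representation \eqref{chil_def}). The one genuine structural difference lies in the stability step. You run Duhamel with the full microscopic propagator $U^{\le N}(t,s)$, so the error is controlled by unitarity plus a pure consistency bound on the defect $\mathcal D(s)$, with no Gronwall feedback on the difference; the price is the two extra tail terms $H(s)(P_{\le N}-\id)\chi^{(r)}(s)$ and $(\id-P_{\le N})G^{(r)}(s)$, which you correctly dispose of by Markov's inequality in $\Number_b$ together with the moment bounds. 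The paper instead runs Duhamel with the Bogoliubov propagator $U_0(t,s)$, which leaves the additional source term $\bigl(H^{\le N}(s)-H_0(s)\bigr)\chi^{\mathrm{rest}}_r(s)$; this is eliminated by computing the derivative of $\norm{\chi^{\mathrm{rest}}_r(t)}^2_{\Gock_{\le N}}$ and invoking self-adjointness, so that only the terms $2\Im\scp{\chi^{\mathrm{rest}}_r(s)}{S_N^{(r-\ell)}(s)\chi_\ell(s)}$ survive, after which Gronwall closes the estimate. Your route buys a cleaner stability argument (unitarity of the true flow) at the cost of tracking the sectors above $N$ explicitly, while the paper's route avoids those tails by measuring everything in the semi-norm on $\Gock_{\le N}$ but needs the cancellation-plus-Gronwall step. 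Both yield the same rate, and both reduce the theorem to the same two technical lemmas, which you have correctly identified as the real content.
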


To formulate our last result, we consider the unitary propagator $U(t,s)$ being defined by
\begin{equation}
i\partial_t U(t,s) = H(t) U(t,s) .
\end{equation} 
With the formal ansatz
\begin{equation}
U(t,s) = \sum_{\ell=0}^\infty N^{-\ell/2} U_{\ell}(t,s)
\end{equation}
we obtain similarly to \eqref{chil_def} that
\begin{align}\label{Uell_def}
U_\ell(t,s) = \sum_{k=1}^{\ell} \sum_{\substack{\alpha \in \NNN^k \\ |\alpha| = \ell}} (-i)^k \int_{\Delta_k(s)} \d s^{(k)} \prod_{i=1}^k \widetilde{H}^{(\alpha_i)}(s_i,t) U_0 (t,s),
\end{align}
where $\Delta_k(s)$ is the region $[s,t]^k$ with $s_{i+1} \leq s_i$ for all $i=1,\ldots,k-1$. When $\chi(0) = \sum_{\ell=0}^\infty N^{-\ell/2} \chi_\ell(0)$, we recover the expression \eqref{chil_def} via $\chi_\ell(t) = \sum_{m=0}^\ell U_{\ell-m}(t,0) \chi_m(0)$. Note that $U(t,s)$ and $U_0(t,s)$ are unitary two-parameter groups, but $U_\ell (t,s)$ for $\ell \geq 1$ is generally not unitary. Rather, $U(t,t) = \id = U_0(t,t)$ yields $U_\ell(t,t) = 0$, and the group property $U(t,s)U(s,r)$ yields $U_\ell(t,r) = \sum_{k=0}^{\ell} U_k(t,s) U_{\ell-k}(s,r)$.

\begin{theorem}\label{thm_higher_orders_unitary_group}
Let $U(t,s)$ be the unitary two-parameter group generated by $H(t)$, and let $\chi \in \Gock$ (possibly $N$-dependent) be such that
\begin{equation}\label{ass_number_unitary}
\forall n \in \NNN_0: \quad \sup_{N\ge 1} \scp{\chi}{(\Number+1)^{n} \chi} < \infty.
\end{equation}
Then for all $r \in \NNN_0$ and $t,s\in\RRR$,
\begin{equation}
\norm{\left( U(t,s) - \sum_{\ell=0}^r N^{-\ell/2} U_\ell(t,s)\right)\chi} \leq \widetilde{C}_r e^{\widetilde{C}_r (|t|+|s|)} N^{-(r+1)/2},
\end{equation}
for some $\widetilde{C}_r > 0$, with $U_\ell(t,s)$ as defined in Equation~\eqref{Uell_def}.
\end{theorem}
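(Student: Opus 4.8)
We prove the operator statement directly by a Duhamel comparison, in close parallel to the proof of Theorem~\ref{thm_higher_orders}. Write $U^{(r)}(t,s) = \sum_{\ell=0}^{r} N^{-\ell/2} U_\ell(t,s)$ for the truncated series. Since $U(t,s)$ is unitary and, by the normalization noted below \eqref{Uell_def}, $U^{(r)}(s,s) = U_0(s,s) = \id$ (as $U_\ell(s,s)=0$ for $\ell\ge 1$), differentiating $\tau \mapsto U(t,\tau)U^{(r)}(\tau,s)$ and using $i\partial_\tau U(t,\tau) = -U(t,\tau)H(\tau)$ yields the exact identity
\begin{equation}
\big(U(t,s) - U^{(r)}(t,s)\big)\chi = i\int_s^t U(t,\tau)\,\mathcal{E}_r(\tau,s)\,\chi\,\d\tau, \qquad \mathcal{E}_r(\tau,s) := i\partial_\tau U^{(r)}(\tau,s) - H(\tau)U^{(r)}(\tau,s).
\end{equation}
By unitarity of $U(t,\tau)$ the whole problem reduces to estimating the defect, namely $\norm{(U(t,s) - U^{(r)}(t,s))\chi} \le \big|\int_s^t \norm{\mathcal{E}_r(\tau,s)\chi}\,\d\tau\big|$.

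The next step is to identify $\mathcal{E}_r$ as a sum of Taylor remainders. From the construction \eqref{Uell_def}, each $U_\ell$ satisfies the exact equation $i\partial_\tau U_\ell(\tau,s) = H_0(\tau)U_\ell(\tau,s) + \sum_{j=1}^{\ell} H_j(\tau)U_{\ell-j}(\tau,s)$ (this is the operator form of \eqref{chil_equation}), so that $i\partial_\tau U^{(r)}(\tau,s) = \sum_{j+m\le r,\,j,m\ge 0} N^{-(j+m)/2} H_j(\tau)U_m(\tau,s)$. Subtracting $H(\tau)U^{(r)}(\tau,s)$ and grouping by $m$, the surviving terms are precisely the Taylor remainders $\mathcal{R}_{j}(\tau) := H(\tau) - \sum_{\ell=0}^{j} N^{-\ell/2}H_\ell(\tau)$ supplied by Lemma~\ref{Taylor_H_with_remainder}:
\begin{equation}
\mathcal{E}_r(\tau,s) = -\sum_{m=0}^{r} N^{-m/2}\,\mathcal{R}_{r-m}(\tau)\,U_m(\tau,s).
\end{equation}
Crucially, every summand is of the correct order: $\mathcal{R}_{r-m}(\tau)$ is $O(N^{-(r-m+1)/2})$ on the relevant domain, and the prefactor $N^{-m/2}$ upgrades this to $N^{-(r+1)/2}$ for all $m$.

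It remains to control each term. Lemma~\ref{Taylor_H_with_remainder} furnishes, for a suitable $\kappa_{r-m}\in\NNN_0$, a bound $\norm{\mathcal{R}_{r-m}(\tau)(\Number+1)^{-\kappa_{r-m}}} \le C\,N^{-(r-m+1)/2}$ uniformly in $N$, so that
\begin{equation}
\norm{\mathcal{E}_r(\tau,s)\chi} \le C\sum_{m=0}^{r} N^{-(r+1)/2}\,\norm{(\Number+1)^{\kappa_{r-m}} U_m(\tau,s)\chi}.
\end{equation}
Thus the theorem follows once one establishes the uniform moment-propagation estimate $\norm{(\Number+1)^{\kappa} U_m(\tau,s)\chi} \le C_m e^{C_m|\tau-s|}\,\sup_{N\ge 1}\norm{(\Number+1)^{\kappa'_m}\chi}$ for every $\kappa$ and appropriate $\kappa'_m$; inserting this and integrating $\tau$ over $[s,t]$ produces the factor $e^{\widetilde C_r(|t|+|s|)}$ (the full interval enters because the propagation constants accumulate over $[s,t]$, and the claim for $t<s$ follows by the same computation). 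Assumption \eqref{ass_number_unitary} of boundedness of \emph{all} moments is used here because the required power $\kappa'_m$ grows with $r$.

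The main obstacle is exactly this last moment-propagation bound for the operators $U_m$, which for $m\ge 1$ are not unitary and not even bounded. The plan is to reduce it to two ingredients, both of a local-in-$N$ character since the explicit $N$-dependence has already been extracted into the prefactors: first, that the Bogoliubov propagator $U_0(\tau,s)$ from Section~\ref{sec_Bog_eq} propagates all moments of $\Number$ with at most exponential growth in $|\tau-s|$, uniformly in $N$; and second, that each $H_j(\tau)$ appearing in the expansion of \eqref{full_H} is a fixed-degree polynomial in the creation and annihilation operators, hence relatively bounded by a fixed power of $\Number$, i.e. $\norm{H_j(\tau)(\Number+1)^{-p_j}}\le C$ uniformly in $N$. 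Feeding these into the time-ordered representation \eqref{Uell_def} — where $U_m$ is a finite sum of iterated integrals of the conjugated operators $\widetilde H_{\alpha_i}(s_i,\tau) = U_0(\tau,s_i)H_{\alpha_i}(s_i)U_0(s_i,\tau)$ over the simplices $\Delta_k(s)$ of volume $\sim |\tau-s|^k/k!$ — one commutes the factor $(\Number+1)^{\kappa}$ successively through the product, absorbing each commutator into higher powers of $\Number$ and invoking the $U_0$ moment bounds at each stage. The bookkeeping of how many extra number-operator powers each commutation costs, and the verification that the resulting constants are $N$-uniform, is the genuinely technical part and is where the detailed estimates of Section~\ref{sec:proofs} are required.
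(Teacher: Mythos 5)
Your proposal is correct and follows essentially the same route as the paper: the defect of the truncated series is identified as $-\sum_{\ell=0}^{r}N^{-\ell/2}S_N^{(r-\ell)}(\tau)U_\ell(\tau,s)$ (your $\mathcal{R}_{r-m}$ is exactly the paper's $S_N^{(r-m)}$), controlled via Lemma~\ref{Taylor_H_with_remainder} together with the propagation of moments of $\Number$ through $U_\ell(\tau,s)$ obtained by the argument of Lemma~\ref{lem_propagation_of_moments}. The only (cosmetic) difference is that you exploit unitarity of $U(t,\tau)$ in a direct Duhamel integral, whereas the paper differentiates $\norm{\widetilde{\chi}^{\mathrm{rest}}_r(t,s)}^2$ and closes with Cauchy--Schwarz and Gronwall.
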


\section{Application of Main Results}
\label{sec:discussion-results}

In this section we outline two important applications of our main results. In the first part, we use that the Bogoliubov Hamiltonian is a quadratic operator which allows us to explicitly compute the action of the Bogoliubov time evolution on the creation and annihilation operators. In particular, this yields a simple method to compute correlation functions within the Bogoliubov approximation. The explicit form of the higher order terms from \eqref{chil_def} allows us to approximate correlation functions also to higher order. We use this in the second part, where we apply Theorem \ref{thm_higher_orders}  to obtain a set of coupled PDEs that describe the next-order correction to the time evolution of the one-particle reduced density matrices.

\subsection{Explicit action of the Bogoliubov time evolution}

To determine how the Bogoliubov time evolution acts on the  creation and annihilation operators, it is convenient to start by writing the Bogoliubov Hamiltonian as an operator that is quadratic in one type of (two-component) creation/annihilation operators. To this end, we introduce the Fock spaces
\begin{align}
\mathcal L= \bigoplus_{n=0}^{\infty}  \mathfrak{h}_2^{\otimes_s n} \quad \text{and} \quad \mathcal L_\perp 
= \bigoplus_{n=0}^{\infty}  \mathfrak{h}_{\perp \varphi(t)}^{\otimes_s n} \subset \mathcal L,
\end{align} 
with
\begin{align}
\mathfrak{h}_2 &= L^2(\mathbb{R}^3) \oplus L^2(\mathbb{R}^3) \quad \text{and} \quad 
\mathfrak{h}_{\perp \varphi(t)} = \{\varphi(t)\}^{\perp} \oplus L^2(\mathbb{R}^3)  \subset \mathfrak{h}_2.
\end{align}
We denote the creation and annihilation operators on these spaces by $Z^*$ and $Z$. Note that the space $\mathcal L_\perp$ is unitarily equivalent to $\mathcal G$, that is, there is a unitary (time-independent) map $V$ such that $V  \Omega_{\Gock}  = \Omega_{\mathcal L_\perp}$ and

\begin{subequations}
\begin{align}
Z(f \oplus g) &= V  \left( b(f) \otimes \id + \id \otimes a(g) \right) V ^*, \\
Z^{*}(f \oplus g) &= V  \left( b^{*}(f) \otimes \id + \id \otimes a^{*}(g) \right) V ^*
\end{align}
\end{subequations}
(see also \cite[Chapter 4.2]{MNO2019}). As usual, we define the second quantization of a linear operator $A$ on $\mathfrak h_2$ by
\begin{align}
d\Gamma (A)  = \sum_{n,m=0}^\infty \scp{u_n}{A u_m} Z^* (u_n) Z(u_m)
\end{align}
where $(u_n)_{n\in \mathbb N}\subset \mathfrak{h}_2$ is a suitable ONB. The number operator, for instance, is given by $\mathcal N = d\Gamma(1)$. The action of the Bogoliubov Hamiltonian on $\mathcal L_\perp$ can then be written as
\begin{align}\label{H_Bog_2X}
\mathbb H_0(t) & = V H_0(t) V^* \nonumber \\
&  = d\Gamma(A(t)) +  \frac{1}{2} \Bigg( \sum_{n,m} \scp{u_n}{B(t) \overline{ u_m} } Z^* (u_n) Z^* (u_m) +  \sum_{n,m} \overline{ \scp{u_n}{B(t) \overline{ u_m } }} Z (u_n) Z (u_m) \Bigg)
\end{align}
where
\begin{equation}
\label{eq: def A(t)}
A(t)= A_1 + A_{2}(t)= \left(\begin{array}{cc} h(t) & \overline{K^{*}_{-}(t)} \\ \overline{K_{-}(t)} & \omega \end{array}\right)=\left(\begin{array}{cc} -\Delta & 0 \\ 0 & \omega \end{array}\right) + \left(\begin{array}{cc} \Phi(t,x) - \mu(t) & \overline{K^{*}_{-}(t)} \\ \overline{K_{-}(t)} &  0 \end{array}\right)
\end{equation}
and 
\begin{equation}
\label{eq: def B(t)}
B(t) = \left(\begin{array}{cc} 0 & \overline{K^{*}(t)} \\ K(t) & 0 \end{array}\right)
\end{equation}
are defined as operators on $\mathfrak{h}_2$, $(u_n)_{n\in \NNN } \subset D(A_1)$ is an ONB of $\mathfrak{h}_2$ and $K_{-}(t)$ is the operator on $L^2(\mathbb{R}^3)$ with kernel $K_{-}(t,k,x) = K(t,-k,x)$. We can now apply well-known results about the time evolution of quadratic Hamiltonians to our model \eqref{H_Bog_2X}; we refer to \cite{JPS2007} and \cite{bossmann} for an introduction to the topic and an overview of some of the results.
For
\begin{align}
F = f \oplus J g = f \oplus \overline{g} = f_1 \oplus f_2 \oplus \overline{g_1} \oplus \overline{g_2} \in \mathfrak{h}_2 \oplus \mathfrak{h}_2 ,
\end{align}
let the generalized creation and annihilation operators $A^{*}(F)$ and $A(F)$ be defined by
\begin{align}
A^{*}(F) = Z^{*}(f) + Z(g) = A(\mathcal{J} F), \quad A(F) &= Z(f) + Z^{*}(g),
\label{eq: definition generalized creation operatorX}
\end{align}
where the operator
\begin{align}
\mathcal{J} =  \left( \begin{array}{rr}
0 & J \\
J & 0 \\
\end{array}\right)
\end{align}
acts on $\mathfrak{h}_2 \oplus \mathfrak{h}_2$ and $J: \mathfrak{h}_2 \rightarrow \mathfrak{h}_2$, $J(g_1(k), g_2(x)) = (\overline{g_1(k)}, \overline{g_2(x)})$ is the complex conjugation map. We further define $\mathcal S$ on $\mathfrak{h}_2 \oplus \mathfrak{h}_2 $ by
\begin{align}
\mathcal{S} =  \left( \begin{array}{rr}
1 & 0   \\
0 & -1 \\
\end{array}\right) .
\end{align}
The generalized creation and annihilation operators satisfy the commutation relations
\begin{align}
\label{eq: commutation relation generalized creation and annihilation operators}
\left[ A(F) , A^{*}(G) \right] = \scp{F}{\mathcal{S} G}_{\mathfrak{h}_2 \oplus \mathfrak{h}_2} , \qquad 
\left[ A(F) , A(G) \right] =  \scp{F}{\mathcal{S} \mathcal{J} G}_{\mathfrak{h}_2 \oplus \mathfrak{h}_2} .
\end{align}
Next we recall the notion of Bogoliubov maps and Bogoliubov transformations. A bounded operator
\begin{align}
\mathcal{V}: \mathfrak{h}_2 \oplus \mathfrak{h}_2 \rightarrow \mathfrak{h}_2 \oplus \mathfrak{h}_2 
\end{align}
is called a Bogoliubov map if it satisfies
\begin{align} \label{Conditions Bog map}
\mathcal{V}^* \mathcal{S} \mathcal{V} = \mathcal{S} = \mathcal{V} S \mathcal{V}^* \quad \text{and} \quad
\mathcal{J} \mathcal{V} \mathcal{J} = \mathcal{V} .
\end{align}
Equivalently, $\mathcal{V}$ is a Bogoliubov map if it has the block form
\begin{align}
\mathcal{V} =  \left( \begin{array}{rr}
	u & \overline{v} \\
	v & \overline{u} \\
	\end{array}\right) 
	\quad \text{with} \quad  u,v: \mathfrak{h}_2  \rightarrow \mathfrak{h}_2, \quad \overline{v} = J v J , \quad  \overline{u} = J u J,
\end{align}
and $u,v$ satisfy the conditions $u^* u = \id + v^* v, \, \, u u^* = \id + \overline{v} \, \overline{v}^*, \, \,
	v^* \overline{u} = v^* \overline{v} , \, u v^* = \overline{v} \, \overline{u}^* .$

A well-known result about Bogoliubov maps states that there exists a unitary transformation $\mathbb{U}_{\mathcal{V}}: \mathcal L \rightarrow \mathcal L$ satisfying 
\begin{align}
\label{eq: Bogoliubov tranformation action on AX}
\mathbb{U}_{\mathcal{V}}^* A(F) \mathbb{U}_{\mathcal{V}} = A(\mathcal{V} F)\quad \forall F \in \mathfrak{h}_2 \oplus \mathfrak{h}_2,
\end{align}
if and only if it satisfies the Shale--Stinespring condition
\begin{align}
\norm{v}_{{\rm HS}}^2 = \text{Tr}_{\mathfrak h_2} \left( v^* v \right) < \infty.
\end{align}
In this case, $\mathcal{V}$ is called unitarily implementable and its implementation $\mathbb U_{\mathcal V}$ will be called the Bogoliubov transformation associated to $\mathcal V$.

It is a fundamental property of the dynamics generated by (suitable) quadratic Hamiltonians on Fock space that they are equivalent to
time-dependent Bogoliubov transformations. 
More precisely, in our case, the
unitary propagator $U_0(t,t_0) : \mathcal{G} \rightarrow \mathcal{G}$ that corresponds to the Schr\"odinger
equation \eqref{Bog_Schr},\footnote{Existence and further properties of
  $U_0(t,t_0)$ will be discussed in Section \ref{sec:well-posedn-bogol}.}
is unitarily equivalent to the Bogoliubov transformation $U_{\mathcal{V}(t,t_0)} : \mathcal{L}_{\perp} \rightarrow \mathcal{L}_{\perp}$, i.e., $U_0(t,t_0) = V^* U_{\mathcal{V}(t,t_0)} V$, where the time-dependent Bogoliubov map $\mathcal V(t,t_0)$ solves the PDE
\begin{align}
\begin{cases}\label{eq: PDE Bogoliubov map}
\quad i \partial_t \mathcal{V}(t,t_0) & = \ \ \mathcal{A}(t) \mathcal{V}(t,t_0) \\
\quad \quad \mathcal{V}(t_0,t_0) &= \ \ \id ,
\end{cases}
\end{align}
with
\begin{align}
\mathcal{A}(t) = \left( \begin{matrix}
A(t)  & -    B(t) \\[1.5mm]
\overline{B(t)} & - \overline{A(t)}  
\end{matrix}\right),
\end{align}
and $A(t)$ and $B(t)$ defined by \eqref{eq: def A(t)} and \eqref{eq: def B(t)}, respectively. Note that Equation \eqref{eq: PDE Bogoliubov map} reduces the Bogoliubov equation to a PDE on $\mathfrak{h}_2 \oplus \mathfrak{h}_2$
and that 
\begin{align}
\label{eq: Bogoliubov tranformation action on AX2}
V U_{0}(t,0)^* V^{*} A( F) V U_{0}(t,0) V^{*} =  A(\mathcal{V} (t,0) F)\quad \forall F \in \mathfrak{h}_2 \oplus \mathfrak{h}_2 .
\end{align}
Equation \eqref{eq: Bogoliubov tranformation action on AX2} is our main result of this section. It is very convenient since it allows one to explicitly compute expectation values of a large class of observables in the state $\chi_0(t)$ or, equivalently, in the state $\Psi_N^{(0)}(t)$ defined by \eqref{def Psi_N^(0)(t)}.

\subsection{Equations for the reduced density and next-order corrections}

The expansion of the wave function in Theorem~\ref{thm_higher_orders} implies a corresponding expansion of correlation functions, as has been discussed in \cite{bossmann} for pair-interacting bosons. Here, we provide the next-order correction to the reduced one-particle density matrices, as one important application of Theorem~\ref{thm_higher_orders}. The one-particle reduced density matrix of the condensate particles is defined as
\begin{equation}
\gammapart = \tr_{2\ldots,N} \ketbr{\Psi_N(t)},
\end{equation}
and of the field as
\begin{equation}
\gammafield(k,k') = N^{-1}\scp{\Psi_N(t)}{a^*(k') a(k)\Psi_N(t)}_{\Hilbert_N}.
\end{equation}
These reduced densities can be used to compute expectation values of bounded one-body operators. We now express the reduced densities in terms of the excitation vector $\chi_{\leq N}(t)$ by using the unitary $U_N(t) = \widetilde{U}_N(t) \otimes W^*(\sqrt{N}\alpha(t))$ that was explicitly defined in \eqref{psi_to_excitations} (see Appendix \ref{section: more details on the excitation Fock space} for more details). A straightforward computation (using Lemma \ref{lemma:properties U}) shows that
\begin{align}\label{gamma_part_exc}
\gammapart(x,x') &= N^{-1} \scp{\chi_{\leq N}(t)}{\widetilde{U}_N(t) b^*(x') b(x) \widetilde{U}^*_N(t) \chi_{\leq N}(t)}_{\Hilbert_N} \nonumber\\
\begin{split}&= \varphi(t,x) \overline{\varphi(t,x')} + N^{-1/2} \left( \varphi(t,x) \overline{\betapart(x')} + \betapart(x) \overline{\varphi(t,x')} \right) \\
&\qquad + N^{-1} \left( \gammapartFock(x,x') - \varphi(t,x) \overline{\varphi(t,x')} \, \tr\, \gammapartFock  \right),\end{split}
\end{align}
where, for any $\Phi \in \mathcal{G}$,
\begin{equation}
\betapartPhi(x) := \scp{\Phi}{\sqrt{[1-N^{-1}\Number_b]_+}\, b(x) \Phi}_{\mathcal{G}}, \quad \gammapartFockPhi(x,x') := \scp{\Phi}{b^*(x') b(x) \Phi}_{\mathcal{G}},
\end{equation}
and
\begin{align}\label{gamma_field_exc}
\gammafield(k,k') &= N^{-1} \scp{\chi_{\leq N}(t)}{W^*(\sqrt{N}\alpha(t)) a^*(k') a(k)W(\sqrt{N}\alpha(t)) \chi_{\leq N}(t)}_{\Hilbert_N} \nonumber\\
\begin{split}&= \alpha(t,k) \overline{\alpha(t,k')} + N^{-1/2} \left( \alpha(t,k) \overline{\betafield(t,k')} + \betafield(t,k) \overline{\alpha(t,k')} \right) \\
&\qquad + N^{-1} \gammafieldFock(k,k'),\end{split}
\end{align}
where, for any $\Phi \in \mathcal{G}$,
\begin{equation}
\betafieldPhi(k) := \scp{\Phi}{a(k)\Phi}_{\mathcal{G}}, \quad \gammafieldFockPhi(k,k') := \scp{\Phi}{a^*(k') a(k)\Phi}_{\mathcal{G}}.
\end{equation}
Equations~\eqref{gamma_part_exc} and \eqref{gamma_field_exc} tell us how the reduced densities of the particles and the quantum field are related to reduced densities and one-point functions of the excitations.

The leading orders of the reduced density matrices of the excitations, $\gammapartFockz$ and $\gammafieldFockz$, can be computed via Bogoliubov theory. It is easier to directly compute the time evolution of the generalized one-particle density matrix $\Gamma_{\chi_{0}(t)}$, defined via
\begin{equation}
\scp{F_1}{\Gamma_{\chi_{0}(t)}F_2}_{\mathfrak{h}_{2} \oplus \mathfrak{h}_{2}} := \scp{\chi_0(t)}{V^{*} A^*(F_2) A(F_1) V \chi_0(t)}_{\mathcal{G}}
\end{equation}
for any $F_1,F_2 \in \mathfrak{h}_{\perp \varphi(t)} \oplus \mathfrak{h}_{\perp \varphi(t)}$, where $A^*, A$ are the generalized creation and annihilation operators defined in \eqref{eq: definition generalized creation operatorX}. Using the fact that $\chi_0(t) = V^{*} \mathbb U_{\mathcal{V}(t,0)} V \chi_0(0)$ with the Bogoliubov map $\mathcal{V}(t, 0)$ that solves \eqref{eq: PDE Bogoliubov map}, a direct computation yields
\begin{align}
\Gamma_{\chi_{0}(t)} = \mathcal{S} \mathcal{V}(t, 0) \mathcal{S} \Gamma_{\chi_{0}(0)} \mathcal{S} \mathcal{V}^*(t, 0) \mathcal{S}, 
\end{align}
or
\begin{align}\label{eq_Gamma}
i\partial_t \Gamma_{\chi_{0}(t)} =  \mathcal{A}(t)^* \Gamma_{\chi_{0}(t)} - \Gamma_{\chi_{0}(t)} \mathcal{A}(t).
\end{align}
From  $\Gamma_{\chi_{0}(t)}$ we can read off all possible two-point functions of $\chi_0(t)$, in particular $\gammapartFockz$ and $\gammafieldFockz$.

Let us next consider the one-point functions $\betapart$ and $\betafield$. The leading order of $\betapart$ is $\betapartzz(t,x) := \scp{\chi_{0}(t)}{b(x) \chi_{0}(t)}_{\mathcal{G}}$. If we assume that $\chi_0(0) = V^{*} \mathbb U_{\mathcal{V}} \Omega$ for some Bogoliubov map $\mathcal{V}$ (as defined in the previous section), i.e., that $\chi_0(0)$ is quasi-free, then $\chi_0(t) = V^{*} \mathbb U_{\mathcal{V}(t,0)} \mathbb U_{\mathcal{V}} \Omega$ is also quasi-free, as the composition of two Bogoliubov transformations defines again a Bogoliubov transformation. Therefore, $\betapartzz = 0$ due to Wick's rule \cite{JPS2007}.

\begin{remark}\label{remark_ini_data}
We could more generally assume that $\chi_0(0) = V^{*} U_{\mathcal{V}} Z^*(f_1) \cdots Z^*(f_n) \Omega$ for some orthonormal $f_1,\ldots,f_n \in \mathfrak{h}_{\perp \varphi(0)}$, i.e., that $\chi_0(0)$ is a state with a fixed number $n$ of excitations. Then $\betapartzz$ is a $(2n+1)$-point function of a quasi-free state, i.e., still $\betapartzz = 0$ due to Wick's rule. Such states are the prediction of Bogoliubov theory for the low-energy excited states of trapped systems; see \cite{BPS2021} for references in the case of bosons with two-body interaction.

The structure of trapped initial data $\chi_\ell(0)$ for $\ell \geq 1$ follows from time-independent perturbation theory, which was proven in \cite{BPS2021} for a class of two-body interactions. We expect similar results to hold for our model, in particular that
\begin{equation}\label{chi_l_zero}
\chi_{\ell}(0) = V^{*} \sum_{\substack{m=0 \\ m+\ell\,\text{even}}}^{3\ell} \sum_{\mu = 0}^m G_{m,\mu}^{(\ell)} V \chi_0(0),
\end{equation}
where $G_{m,\mu}^{(\ell)}$ is the quantization of a bounded operator $g_{m,\mu}^{(\ell)}:(\mathfrak{h}_{\perp \varphi(0)})^{m-\mu} \to (\mathfrak{h}_{\perp \varphi(0)})^{\mu}$, i.e.,
\begin{equation}
G_{m,\mu}^{(\ell)} = \int \d x^{(\mu)} \int \d y^{(m-\mu)} g_{m,\mu}^{(\ell)}(x^{(\mu)}; y^{(m-\mu)}) Z^*(x_1) \ldots Z^*(x_\mu) Z(y_1) \ldots Z(y_{m-\mu}),
\end{equation}
with $x^{(\mu)} := (x_1,\ldots,x_\mu)$. So for even $\ell+n$ ($n$ being the number of excitations in $\chi_0(0)$) the state $\chi_\ell(0)$ has only an even number of excitations, whereas for odd $n+\ell$ it has an odd number. Note that the time evolution \eqref{chil_def} indeed preserves the structure of Equation~\eqref{chi_l_zero}, i.e., also
\begin{equation}\label{chi_l_t}
\chi_{\ell}(t) = V^{*} \sum_{\substack{m=0 \\ m+\ell\,\text{even}}}^{3\ell} \sum_{\mu = 0}^m G_{m,\mu}^{(\ell)}(t) V \chi_0(t)
\end{equation}
for some $G_{m,\mu}^{(\ell)}(t)$. Assuming \eqref{chi_l_zero}, all correlation functions of the type $\scp{\chi_\ell(t)}{Z_1^{\sharp} \ldots Z_k^{\sharp} \chi_m(t)}$, where $Z^{\sharp} \in \{ Z, Z^* \}$, vanish for $\ell+m+k$ odd due to Wick's rule.
\end{remark}

With the assumption from Remark~\ref{remark_ini_data}, all terms of order one and of order $N^{-1}$ vanish in the expansion of $\betapart$, so
\begin{align}
\betapart = N^{-1/2} \betapartzo(t) + O(N^{-3/2}), ~~\text{with}~~ \betapartzo(t,x) = \scp{\chi_{0}(t)}{b(x) \chi_{1}(t)}_{\mathcal{G}} + \scp{\chi_{1}(t)}{b(x) \chi_{0}(t)}_{\mathcal{G}}.
\end{align}
In the same way,
\begin{align}
\betafield = N^{-1/2} \betafieldzo(t) + O(N^{-3/2}), ~~\text{with}~~ \betafieldzo(t,k) = \scp{\chi_{0}(t)}{a(k)\chi_{1}(t)}_{\mathcal{G}} + \scp{\chi_{1}(t)}{a(k)\chi_{0}(t)}_{\mathcal{G}}.
\end{align}
A direct computation shows that $\betapartzo$ and $\betafieldzo$ solve the coupled PDEs
\begin{subequations}
\begin{align}\label{eq_beta_part}
\begin{split}i \partial_t \betapartzo(t,x) &= h(t) \betapartzo(t,x) + \int \d k \left( K(t,k,x)\overline{\betafieldzo(t,k)} + K(t,-k,x) \betafieldzo(t,k)\right)  \\
&\quad + \int \d k \d y F(t,x,k,y) \scp{\chi_0(t)}{\big(a(k)^* b(y) + a(-k) b(y)\big) \chi_0(t)}_{\mathcal{G}}\end{split} \\
\label{eq_beta_field}\begin{split}i \partial_t \betafieldzo(t,k) &= \omega(k) \betafieldzo(t,k) + \int \d x \left( K(t,k,x) \overline{\betapartzo(t,x)} + \betapartzo(t,x) \overline{K(t,-k,x)} \right) \\
&\quad + \int \d x \d y F(t,x,k,y) \scp{\chi_0(t)}{b^*(x) b(y) \chi_0(t)}_{\mathcal{G}},\end{split} 
\end{align}
\end{subequations}
where
\begin{equation}
F(t,x,k,y) := \eta(k) \int \d z \left( q(t,x,z) e^{-2\pi ikz} q(t,z,y) - \delta(x-y) e^{-2\pi ikz} |\varphi(t,z)|^2 \right).
\end{equation}
The two-point functions such as $\scp{\chi_0(t)}{a(k)^* b(y)\chi_0(t)}_{\mathcal{G}}$ can be read off from the corresponding entry of $\Gamma_{\chi_{0}(t)}$.

To summarize, for initial data as in Remark~\ref{remark_ini_data} we have found that
\begin{align}
\gammapart &= \ketbr{\varphi(t)} + N^{-1} \left( \ketbra{\varphi(t)}{\betapartzo(t)} + \ketbra{\betapartzo(t)}{\varphi(t)} + \gammapartFockz - \ketbr{\varphi(t)} \, \tr \gammapartFockz \right) \nonumber\\
&\quad + O(N^{-2}),
\end{align}
and
\begin{align}
\gammafield &= \ketbr{\alpha(t)} + N^{-1} \left( \ketbra{\alpha(t)}{\betafieldzo(t)} + \ketbra{\betafieldzo(t)}{\alpha(t)} + \gammafieldFockz \right) + O(N^{-2}),
\end{align}
where the one-point functions $\betapartzo(t)$ and $\betafieldzo(t)$ solve Equations~\eqref{eq_beta_part} and \eqref{eq_beta_field}, and the two-point functions $\gammapartFockz$ and $\gammafieldFockz$ can be read off from $\Gamma_{\chi_{0}(t)}$, the solution of Equation~\eqref{eq_Gamma}.

\section{Well-posedness of the Bogoliubov Equation\label{sec_Bog_eq}}
\label{sec:well-posedn-bogol}

Let us now discuss the well-posedness of the Cauchy problem
\begin{align}
\label{Bogoliubov_SE 2}
\begin{cases}
 \,  i\partial_t \chi(t) = \mathbb H_0(t) \chi (t)\\[1.5mm]
\quad   \chi(t_0)= \chi_0 \in \mathcal L.
\end{cases}\;
\end{align}
Since $H_0(t)$ is unitarily equivalent to $\mathbb H_0(t)$ (see \eqref{H_Bog_2X}), this provides the well-posedness of the Bogoliubov equation \eqref{Bog_Schr}. The main ingredient in the proof is a regularity estimate for the time-dependent part of $H_0(t)$; more precisely, below we shall show that $A_2(t)$ and $B(t)$, defined in \eqref{eq: def A(t)} and \eqref{eq: def B(t)}, satisfy
\begin{align}\label{eq: conditions in Bogoliubov dynamicsx}
\sup_{t\in K} \bigg( \norm{A_2(t)} + \norm{\frac{d}{dt}A_2(t)} + \norm{ B(t)}_{\rm HS} + \norm{\frac{d}{dt}B(t)}_{\rm HS} \bigg) < \infty 
\end{align} 
for any compact set $K\subset \mathbb R$, where $\norm{\cdot}_{\rm HS}$ denotes
the Hilbert--Schmidt norm. Given the above bound, the following theorem holds.

\begin{theorem}
Let $(\varphi(t), \alpha(t))$ be the unique solution of \eqref{varphi_eq}-\eqref{alpha_eq} with initial datum 
\begin{align}
(\varphi(t_0), \alpha(t_0)) \in H^2(\mathbb{R}^3) \times L^2_1(\mathbb{R}^3).
\end{align}
Then there exists a two-parameter unitary group $(U_0(t,t_0))_{t,t_0\in \mathbb{R}}$ on
$\mathcal L  $ such that for any $t,t_0\in \mathbb{R}$,
\begin{align}
\chi_{t_0}(t) = U_0(t,t_0 ) \chi
\end{align}
is the unique solution to \eqref{Bogoliubov_SE 2} in the following sense. For
any $\xi,\chi \in Q(d\Gamma( 1 + A_1 ) ) \subseteq \mathcal L $ (where $Q(\cdot)$ denotes the quadratic form domain), we have that $U_0 (t,t_0)\chi, U_0(t_0,t)\xi\in
Q(d\Gamma( 1 + A_1 ) )$, and therefore the function $\langle \xi , U_0 (t,t_0)\chi
\rangle_{\mathcal{L}}$ is differentiable with respect to both $t$ and $t_0$, with
\begin{subequations}
\begin{align}
  \label{eq:1x}
  i\partial_t \langle \xi  , U_0 (t,t_0)\chi \rangle_{\mathcal{L}} &= \langle \xi  , \mathbb H_0 (t) U_0  (t,t_0)\chi \rangle_{\mathcal{L}}\\
 \label{eq:2x}
  i\partial_{t_0} \langle \xi  , U_0  (t,t_0)\chi \rangle_{\mathcal{L}} &= - \langle \xi  , U_0 (t,t_0) \mathbb H_0 (t_0)\chi \rangle_{\mathcal{L}}\; .
\end{align}
\end{subequations}
Moreover, $U_0(t,t_0)$ satisfies the following properties.
\begin{itemize}
\item[(i)] For any $r\in \mathbb{R}$ there exists a constant $C_r>0$, such that
\begin{align}
  \label{eq:4x}
\lVert (d\Gamma(1)+1)^{r} U_0 (t,t_0) (d\Gamma(1)+1)^{-r}  \rVert_{}^{}\leq e^{C_r\lvert t-t_0  \rvert_{}^{}}\; 
\end{align}
for all $t,t_0 \in \mathbb R$.
\item[(ii)] For $\chi \in \mathcal L_\perp(t_0) $, we have $\chi_{t_0}(t) \in \mathcal L_\perp(t)$.
\end{itemize}
\end{theorem}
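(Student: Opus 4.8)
The plan is to treat $\mathbb H_0(t)$ as a perturbation, relatively bounded with respect to the number operator, of the time-independent free part. Writing $\mathbb H_0(t) = d\Gamma(A_1) + \mathbb W(t)$, where $d\Gamma(A_1)$ with $A_1 = \mathrm{diag}(-\Delta,\omega)$ is self-adjoint and generates the unitary group $e^{-it\,d\Gamma(A_1)}$, the remainder $\mathbb W(t)$ collects $d\Gamma(A_2(t))$ together with the pairing terms built from the kernel $B(t)$. The bound \eqref{eq: conditions in Bogoliubov dynamicsx} is precisely what makes $\mathbb W(t)$ relatively bounded by $\mathcal N + 1 = d\Gamma(1)+1$: one has $\norm{d\Gamma(A_2(t))\chi} \le \norm{A_2(t)}\,\norm{\mathcal N\chi}$, while the standard estimate for a quadratic term with Hilbert--Schmidt kernel gives $\norm{(\text{pairing part})\chi} \le C\norm{B(t)}_{\rm HS}\,\norm{(\mathcal N+1)\chi}$, both uniformly on compact time intervals. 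First I would pass to the interaction picture with respect to $e^{-it\,d\Gamma(A_1)}$ and construct the propagator by a Dyson-type iteration on the dense domain of vectors with all moments of $\mathcal N$ finite, where the series converges because each factor of $\mathbb W$ raises the particle number by at most two and is dominated by $\mathcal N+1$.

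The crux, and the source of property (i), is an a priori moment estimate. Since $d\Gamma(A_1)$ and $d\Gamma(A_2(t))$ commute with $\mathcal N$, only the pairing terms change the particle number, and their commutator with $(\mathcal N+1)^{2r}$ is controlled by $\norm{B(t)}_{\rm HS}\,(\mathcal N+1)^{2r}$ thanks to the shift-by-two property and the Hilbert--Schmidt bound. Computing the derivative of $\scp{\chi(t)}{(\mathcal N+1)^{2r}\chi(t)}$ and applying Grönwall then yields $\scp{\chi(t)}{(\mathcal N+1)^{2r}\chi(t)} \le e^{C_r|t-t_0|}\scp{\chi}{(\mathcal N+1)^{2r}\chi}$, which is \eqref{eq:4x} for integer powers; the general real exponent follows by interpolation and duality. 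This computation is formal and must be justified by a regularization — for instance replacing $A_1$ by the bounded $A_1(1+A_1/n)^{-1}$, or truncating the high-$\mathcal N$ sectors — for which a propagator $U_0^{(n)}(t,t_0)$ exists by bounded-generator theory and the moment bounds hold uniformly in $n$. An analogous Grönwall argument, now using also the derivative bounds $\norm{\tfrac{d}{dt}A_2(t)}$ and $\norm{\tfrac{d}{dt}B(t)}_{\rm HS}$ from \eqref{eq: conditions in Bogoliubov dynamicsx}, controls the energy $\scp{\chi(t)}{d\Gamma(1+A_1)\chi(t)}$ and shows that the form domain $Q(d\Gamma(1+A_1))$ is preserved.

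With these uniform estimates in hand, I would show that $U_0^{(n)}(t,t_0)\chi$ is Cauchy in $\mathcal L$ for $\chi$ in the moment domain: the difference is estimated by Duhamel, the difference of the regularized generators tending to zero strongly on that domain while remaining dominated by the uniform moment bounds. The strong limit defines $U_0(t,t_0)$, which extends to all of $\mathcal L$ by density and uniform boundedness; unitarity and the two-parameter group property pass to the limit from the approximants, and the weak equations \eqref{eq:1x}--\eqref{eq:2x} follow by differentiating $\scp{\xi}{U_0(t,t_0)\chi}$ against form-domain vectors and using the preservation of $Q(d\Gamma(1+A_1))$. Uniqueness follows from the moment and energy estimates applied to the difference of two solutions.

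Finally, property (ii) is a structural statement at the one-particle level. Because the condensate obeys $i\partial_t\varphi = h(t)\varphi$, the one-particle flow generated by $h(t)$ maps $\{\varphi(t_0)\}^\perp$ onto $\{\varphi(t)\}^\perp$, while the off-diagonal coupling kernel $K(t) = \widetilde K(t)\overline{q(t)}$ carries the projection $q(t)$ onto $\{\varphi(t)\}^\perp$, so the coupling never populates the $\varphi(t)$ direction; hence the flow of $\mathcal A(t)$ maps $\mathfrak h_{\perp\varphi(t_0)}\oplus\mathfrak h_{\perp\varphi(t_0)}$ into $\mathfrak h_{\perp\varphi(t)}\oplus\mathfrak h_{\perp\varphi(t)}$, and its second-quantized Bogoliubov lift preserves the corresponding sub-Fock space $\mathcal L_\perp$. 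I expect the main obstacle throughout to be the interplay of the unbounded, time-independent $d\Gamma(A_1)$ with the time-dependence of $\mathbb W(t)$: since $\mathbb W(t)$ is only relatively bounded by $\mathcal N+1$ and not by the energy, the standard theory for a fixed form domain does not apply directly, and the entire construction must be anchored to the uniform-in-$n$ moment and energy estimates of the previous steps.
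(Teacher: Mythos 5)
Your proposal is sound in outline but follows a genuinely different, more self-contained route than the paper. For existence and the weak equations \eqref{eq:1x}--\eqref{eq:2x}, the paper does not reconstruct the propagator at all: it invokes \cite[Theorem 8]{Lewin:2015a} for evolutions generated by time-dependent quadratic Hamiltonians, checking its hypotheses via \eqref{eq: conditions in Bogoliubov dynamicsx} in analogy to \cite[Proposition 7]{nam:2016}, whereas you rebuild the whole theory by regularization and a Dyson/interaction-picture iteration -- workable (this is essentially what the cited references do), but it is the heaviest part of your argument and the part you leave most schematic. For property (i), both you and the paper run a Gr\"onwall argument on moments of $d\Gamma(1)$ controlled by $\sup_t\norm{B(t)}_{\rm HS}$; the difference is that you work with the positive powers $(\mathcal N+1)^{2r}$ directly, which forces the regularization you mention to make the formal computation legitimate, while the paper first proves the estimate for the \emph{bounded} weights $(d\Gamma(1)+1)^{-r}$ (using the elementary spectral inequality $((n+1)^{-r}-(n+3)^{-r})n\le 2r(n+1)^{-r}$), then extends by interpolation and density, and finally converts $-r$ into $+r$ via the duality $U_0(t,t_0)^*=U_0(t_0,t)$; this sidesteps the domain issues entirely and is the one technical trick worth adopting from the paper. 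For property (ii), your one-particle argument (the flow of $\mathcal A(t)$ respects the splitting $\CCC\varphi(t)\oplus\{\varphi(t)\}^\perp$ because $K(t)=\widetilde K(t)\overline{q(t)}$ and $i\partial_t\varphi=h(t)\varphi$) is correct in spirit, but note that preservation of $\mathcal L_\perp$ is most cleanly expressed as $Z(\varphi(t)\oplus 0)\chi_{t_0}(t)=0$, and the paper proves exactly that by showing $\tfrac{d}{dt}\norm{Z(\varphi(t)\oplus 0)\chi_{t_0}(t)}^2=0$ from the commutation relation $Z(\varphi(t)\oplus 0)\mathbb H_0(t)=\mathbb H_0(t)Z(\varphi(t)\oplus 0)+Z(h(t)\varphi(t)\oplus 0)$ -- a two-line Fock-space computation that is tighter than lifting a statement about invariant one-particle subspaces. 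Finally, the paper's proof also contains the explicit verification of \eqref{eq: conditions in Bogoliubov dynamicsx} (bounds on $\Phi$, $\mu$, $K$ and their time derivatives via the Schr\"odinger--Klein--Gordon equations and Lemma \ref{lemma: solution theory}); you take this bound as given, so a complete write-up along your lines would still need that input.
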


\begin{proof}
The existence of the two-parameter unitary group satisfying \eqref{eq:1x} and
\eqref{eq:2x} follows from a general result about the evolution generated by
a time-dependent Hamiltonian \cite[Theorem 8]{Lewin:2015a}. That
\cite[Theorem 8]{Lewin:2015a} can be applied in our setting under condition \eqref{eq: conditions in Bogoliubov dynamicsx} is shown in analogy to the first part of the proof of \cite[Proposition 7]{nam:2016}. For our purpose, the bound \eqref{eq:4x} is crucial. It can be proved by a Gronwall argument, following the same strategy used in the proof of \cite[Proposition 4.2]{falconi}. (For a similar bound, see also \cite[Theorem 8]{Lewin:2015a}.) In the following, we first outline the argument of the proof of property $(i)$, then of property $(ii)$, and finally we prove the bound \eqref{eq: conditions in Bogoliubov dynamicsx}.

We consider $\chi_{t_0}(t)$, with $\chi\in Q(d\Gamma( 1 + A_1 ) )$. Its derivative is an element of the Hilbert
space obtained from completing $\mathcal{L}$ w.r.t.\ the scalar product
\begin{equation}
  \langle \,\cdot\,  , \bigl(d\Gamma(1+A_1)\bigr)^{-1}\,\cdot\, \rangle_{\mathcal{L}}\;.
\end{equation}
Hence, $M(t,t_0):= \langle \chi_{t_0}(t)  , \chi_{t_0}(t)
\rangle_{\mathcal{L}}$ is differentiable w.r.t.\ both $t$ and $t_0$. Analogously,
for any $r\geq1$,
\begin{equation}
  M_r(t,t_0) := \langle \chi_{t_0}(t)  , \bigl(d\Gamma(1)+1\bigr)^{-r} \chi_{t_0}(t)\rangle_{\mathcal{L}}
\end{equation}
is differentiable w.r.t.\ both $t$ and $t_0$ as well. We have that
\begin{equation}
  i\partial_t M_r(t,t_0) = \langle \chi_{t_0}(t)  , [\bigl(d\Gamma(1)+1\bigr)^{-r},\mathbb{H}_0(t)] \chi_{t_0}(t)\rangle_{\mathcal{L}}\; .
\end{equation}
The commutation yields
\begin{equation}
  \lvert \partial_t M_r(t,t_0)  \rvert_{}^{} \leq 2C \langle \chi_{t_0}(t)  ,  \bigl(\bigl(d\Gamma(1)+1\bigr)^{-r}- \bigl(\mathrm{d}\Gamma(1)+3\bigr)^{-r}\bigr)\, d\Gamma(1)  \chi_{t_0}(t)\rangle_{\mathcal{L}}  \;,
\end{equation}
where $C=\sup_{\tau\in [t_0,t]}\norm{ B(\tau)}_{ \rm HS }$. Now, by spectral
calculus the inequality
\begin{equation}
  \bigl((n+1)^{-r}-(n+3)^{-r}\bigr)n \leq 2r(n+1)^{-r}\;,
\end{equation}
valid for all $n\in \mathbb{N}$, leads to
\begin{equation}
  \lvert \partial_t M_r(t,t_0)  \rvert_{}^{} \leq 4C\,r M_r(t,t_0)  \;.
\end{equation}
Gronwall's lemma gives
\begin{equation}
  \lVert (d\Gamma(1)+1)^{-r/2}\chi_{t_0}(t)  \rVert_{\mathcal{L}}^2\leq e^{4C\,r \lvert t-t_0  \rvert_{}^{}}\lVert (d\Gamma(1)+1)^{-r/2} \chi  \rVert_{\mathcal{L}}^2\;.
\end{equation}
The result is extended to $0\leq r\leq 1$ by interpolation, the case $r=0$ being
trivial. In addition, the bound is extended to any $\chi\in
\mathcal{L}$ by a density argument. Therefore, we can choose $\chi=(d\Gamma(1)+1)^{r/2}\xi$, for some $\xi\in
Q(d\Gamma(1)^r)$. In this case, the result reads
\begin{equation}
  \lVert (d\Gamma(1)+1)^{-r/2}U_0(t,t_0)  (d\Gamma(1)+1)^{r/2}\xi\rVert_{\mathcal{L}}^2\leq e^{4C\,r \lvert t-t_0  \rvert_{}^{}}\lVert \xi \rVert_{\mathcal{L}}^2\;,
\end{equation}
for any $t,t_0\in \mathbb{R}$. Again by density the result is extended to any $\xi\in
\mathcal{L}$. Hence,
\begin{equation}
  \lVert (d\Gamma(1)+1)^{-r/2} U_0 (t,t_0) (d\Gamma(1)+1)^{r/2}  \rVert_{}^{}\leq e^{2C\, r\lvert t-t_0  \rvert_{}^{}}\;.
\end{equation}
Since $U_0 (t,t_0)^{*}= U_0 (t_0,t)$, the bound \eqref{eq:4x} follows, setting
$C_r = 4C \lvert r \rvert_{}^{} $.

Next, we prove property $(ii)$. One computes
\begin{align}
\frac{d}{dt} \norm{ Z( \varphi(t) \oplus 0 ) \chi_{t_0}(t) }^2 = 2 \Im  \scp { Z( \varphi(t) \oplus 0 ) \chi_{t_0}(t)  }{ \big(  Z ( (- h(t) \varphi(t)) \oplus 0 )   +   Z( \varphi(t) \oplus 0 ) \mathbb H_0 (t) \big) \chi_{t_0}(t) } .
\end{align}
Using
\begin{align}
 Z( \varphi(t) \oplus 0) \mathbb H_0 (t)  = \mathbb H_0(t)  Z( \varphi(t) \oplus 0 ) + Z ( h(t) \varphi(t) \oplus 0 ) ,
\end{align}
one finds
\begin{align}
\frac{d}{dt} \norm{ Z( \varphi(t) \oplus 0 ) \chi_{t_0}(t) }^2 =  2 \Im  \scp { Z( \varphi(t) \oplus 0 ) \chi_{t_0}(t)  }{  \mathbb H_0 (t) Z( \varphi(t) \oplus 0 ) \chi_{t_0}(t) } = 0,
\end{align}
and thus $\norm{ Z( \varphi(t) \oplus 0 ) \chi_{t_0}(t) } = 0$ for all $\chi_{0}$ satisfying $\norm{ Z( \varphi(t_0) \oplus 0 ) \chi_{0} }= 0$. This implies $(ii)$.

It remains to show \eqref{eq: conditions in Bogoliubov dynamicsx}. Since
\begin{align}
\vert \Phi(t,x)\vert  \le 2  \norm{\eta}_{L^2}\norm{\alpha(t)}_{L^2},\quad 
\vert \mu(t) \vert \le  \norm{\eta}_{L^2}\norm{\alpha(t)}_{L^2},\quad 
\norm{ K(t,\cdot ,\cdot )}_{L^2(\mathbb R^6)} & \le  2 \norm{\eta}_{L^2},
\end{align}
we have $ \sup_{t \in K} ( \norm{A_2(t)} + \norm{B (t) }_{ \mathfrak{S}^2 } ) <\infty$ for any compact interval $K\subset \mathbb R$.

Using the Schr\"odinger--Klein--Gordon equations, we further find
\begin{align}
 i \partial_t \Phi(x,t)   &   =  \int dk\,  \eta(k) \omega(k) e^{2\pi ikx} \big( \alpha(t,k) - \overline{\alpha(t,- k)} \big),
\end{align} 
and hence, $\sup_{x\in \mathbb R^3} | \partial_t \Phi(x,t)  |  \le  2 \norm{\eta}_{L^2} \, \norm{\omega\alpha(t)}_{L^2}$, where $\norm{\omega\alpha(t)}_{L^2}$ is bounded by Lemma~\ref{lemma: solution theory}. With $\norm{\varphi(t)}_{L^2} = 1 $, we further get
\begin{align}
| \partial_t \mu(t) | \le  \sup_{x\in \mathbb R^3} \Big( \frac{1}{2} \vert \partial_t \Phi(t,x)\vert  + \vert \Phi(t,x)\vert \norm{h(t) \varphi(t)}_{L^2} \Big) . 
\end{align}
The last norm is bounded by estimating
\begin{align}\label{bound for hvarphi}
\sup_{t \in K} \norm{h(t ) \varphi(t) } \le \sup_{t \in K} \big( \norm{\Delta \varphi(t) } + (\vert \Phi(t,x) \vert +  \vert \mu(t) \vert ) \norm{\varphi(t)} \big) <\infty .
\end{align}
To prove the bound for $\partial_t K(t)$, we recall $K(t) = \widetilde K(t) (1-\overline{p(t)})$ and use
\begin{align}
\partial_t K(t) = ( \partial_t \widetilde{K}(t))  \overline{q(t)} - \widetilde{K}(t) \partial_t \overline{p(t)} . 
\end{align}
Then we proceed with $\sup_{t \in K} \norm{\partial_t p(t)}_{\rm HS} \le C$ which follows from  $i \partial_t p(t) = \vert h(t,x)\varphi(t) \rangle \langle  \varphi(t) \vert - \text{h.c.}$ in combination with \eqref{bound for hvarphi}. Hence we can estimate
\begin{align}
\norm{ \widetilde{K}(t) \partial_t \overline{p(t)} }_{\rm HS}^2 & = \int dk \int dx \Big\vert \int dy \widetilde K(t,k,y) \big(  (h\varphi(y)) \overline{\varphi(x)} -  \varphi(y) \overline{ h\varphi(x)}   \big) \Big\vert^2 \notag \\[1mm]
& \le 2 \norm{\varphi(t)}^2_{L^2} \norm{\widetilde K(t, \cdot ,\cdot)}_{L^2}^2 \norm{h(t) \varphi(t)}_{L^2}^2,
\end{align}
and similarly, 
\begin{align}
\norm{	( \partial_t \widetilde{K}(t))  \overline{q(t)} }^2_{\rm{HS}} =  \int dk \int dx \Big\vert \int dy (\partial_t \widetilde K(t,k,y)) \big( \delta(x-y)- \varphi(y) \overline {\varphi(x)}  \big) \Big\vert^2 .
\end{align}
In the last expression we insert $\widetilde{K}(t,k,x) = \eta(k) e^{-2\pi ikx} \varphi(t,x)$ to get
\begin{align}
\norm{	( \partial_t \widetilde{K}(t))  \overline{q(t)} }^2_{\rm{HS}} & =  \int dk \int dx \Big\vert \int dy  \, \eta(k) e^{-2\pi iky} ( \partial_t \varphi(t,y) ) \big( \delta(x-y)- \varphi(y) \overline {\varphi(x)}  \big) \Big\vert^2 \notag \\[1mm]
&  \le 4 \norm{\eta}_{L^2}^2  \norm{ h(t) \varphi(t) }_{L^2}^2.
\end{align}
Together, this implies $\sup_{t\in K}(\norm{\frac{d}{dt} A_2(t) } + \norm{\frac{d}{dt}B(t)}_{ \rm HS } ) <\infty $.
\end{proof}

\section{Proofs}
\label{sec:proofs}

\subsection{Preliminary Lemmas}

Let us start by discussing the power series expansion of the Hamiltonian $H(t)$ in more detail. We prove two lemmas in this subsection: one on the growth of the coefficients, and one on remainder estimates.
Recall that on $[0,1]$ we have the Taylor expansion
\begin{equation}
\sqrt{1-x} = \sum_{n=0}^{k} c_n x^n + \widetilde{R}_k(x),
\end{equation}
with the $c_n$ given by
\begin{equation}\label{def: coefficients cn}
c_n :=
(-1)^{n} \binom{\frac{1}{2}}{n} 
:= (-1)^{n} \frac{\frac{1}{2} \left(-\frac{1}{2}\right) \left(-\frac{3}{2}\right) \cdots \left(\frac{1}{2} - (n-1)\right)}{n!}.
\end{equation}
Defining $\widetilde{R}_k(x) := \sqrt{[1-x]_+} - \sum_{n=0}^{k} c_n x^n$ for all $x \geq 0$, an expansion of the Hamiltonian $H(t)$ yields
\begin{align}\label{Taylor_Ham}
H(t) = \sum_{\ell=0}^{r} N^{-\ell/2} H_\ell (t) +  S_N^{(r)}(t)
\end{align}
with $H_0(t)$ given by \eqref{H_Bog}, and
\begin{subequations}
\begin{align}
H_1(t) &= \int dx\, b^{*}(x) \Big( q(t) \widehat{\Phi} q(t) - \scp{\varphi(t)}{\widehat{\Phi} \varphi(t)} \Big) b(x), \label{def of H_1}\\
H_{2n}(t) & = c_{n} \int dx \int dk\,  K(t,k,x) \big( a^{*}(k) + a(-k) \big) b^*(x)  \Number_b^{n} + \hc \quad \forall n\ge 1, \label{def of H_2n} \\[1.5mm]
H_{2n+1}(t)& = 0 \quad \forall n\ge 1, \label{def of H_2n+1} \\[1.5mm]
S_N^{(0)}(t) & = N^{-1/2} H_1(t) + \Big( \int dx \int dk\,  K(t,k,x) \big( a^{*}(k) + a(-k) \big) b^*(x)    \widetilde{R}_{0}\bigg( \frac{\Number_b}{N}\bigg) + \hc \Big),\\[1.5mm]
S_N^{(1)}(t) & =  S_N^{(0)} - N^{-1/2} H_1(t),\\[1.5mm]
S_N^{(2r)}(t) & =  \int dx \int dk\,  K(t,k,x) \big( a^{*}(k) + a(-k) \big) b^*(x) \widetilde{R}_{r}\bigg( \frac{\Number_b}{N}\bigg) + \hc  \quad \forall r\ge 1,\\[1.5mm]
S_N^{(2r+1)}(t) & = S_N^{(2r)}(t) \quad \forall r\ge 1.\label{S_N even and odd}
\end{align}
\end{subequations}
Note that we write \eqref{Taylor_Ham} as an expansion in $N^{-\ell/2}$ instead of $N^{-\ell}$ due to the appearance of $H_1$.

The next lemma provides estimates that are needed to bound the above operators relative to number operators.
\begin{lemma}\label{lem_Ham_bounds} Let $n\in \mathbb N_0$ and $a^\# \in \{a,a^*\}$. Then there exist positive constants $C$ and $C(n)$, such that
\begin{align}\label{eq: aux bound for H1}
& \norm{ \int \d x \int \d k\, b^{*}(x)\eta(k) \bigg( \big( q(t) e^{-2\pi ik \cdot} q(t) - \scp{\varphi(t)}{e^{- 2\pi ik\cdot}\varphi(t)} \big)a^\#(\pm k) \bigg) b(x)  \phi } \notag \\ 
& \hspace{8cm} \le C \norm{(\mathcal N_a +1)^{1/2} (\mathcal N_b +1) \phi } 
\end{align}
and
\begin{align}\label{eq: aux bound for H2n and Sr}
\norm{ \bigg( \int dx \int dk\, K(t,k,x) a^{\#}(\pm k) b^*(x)  \Number_b^{\frac{n}{2}} + \textnormal{h.c.}\bigg) \phi } \le C(n) \norm{ (\mathcal N_a+1)^{1/2} (\mathcal N_b+1)^{\frac{n+1}{2}} \phi }
\end{align}
for all $\phi \in \mathcal F$.
\end{lemma}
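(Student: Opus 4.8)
The plan is to treat both estimates as instances of the same circle of ideas. After smearing the $b$-operators against their $L^2_x$-kernels, each operator is bilinear or trilinear in creation/annihilation operators paired through an $L^2$-kernel, so it suffices to combine the uniform boundedness of the underlying one-body kernels with the elementary relations $\int \d k\, \norm{a(k)\psi}^2 = \norm{\mathcal{N}_a^{1/2}\psi}^2$ and $\norm{b^*(f)\psi} \le \norm{f}_{L^2}\norm{(\mathcal{N}_b+1)^{1/2}\psi}$, together with the fact that $a$-space and $b$-space operators commute. For \eqref{eq: aux bound for H1} I would first record that $A_k := q(t) e^{-2\pi ik\cdot} q(t) - \scp{\varphi(t)}{e^{-2\pi ik\cdot}\varphi(t)}$ has $\norm{A_k}_{\mathrm{op}} \le 2$ uniformly in $k$, since $q(t)$ is an orthogonal projection, $e^{-2\pi ik\cdot}$ a unitary multiplication operator, and $\norm{\varphi(t)}=1$. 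Writing the operator as $\int \d k\, \eta(k)\, d\Gamma_b(A_k)\, a^\#(\pm k)$ and using $\norm{d\Gamma_b(A_k)\psi} \le \norm{A_k}_{\mathrm{op}}\norm{\mathcal{N}_b\psi}$ reduces the estimate to controlling $\int \d k\, \eta(k)\, a^\#(\pm k)\,\psi_k$ with $\psi_k := d\Gamma_b(A_k)\phi$. For \eqref{eq: aux bound for H2n and Sr} the analogous reduction writes $\int \d x\, K(t,k,x) b^*(x) = b^*(K(t,k,\cdot))$, using $\norm{K(t,\cdot,\cdot)}_{L^2(\RRR^6)} \le 2\norm{\eta}_{L^2}$ and the commutation of $\mathcal{N}_b^{n/2}$ past the $b$-operators.

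For the annihilation instances ($a^\# = a$) the bound is direct. By the triangle inequality I pull the norm inside the $k$-integral, apply the $d\Gamma_b$-bound (resp. $\norm{b^*(f)\xi} \le \norm{f}_{L^2}\norm{(\mathcal{N}_b+1)^{1/2}\xi}$), and then use the Cauchy--Schwarz inequality in $k$ against $\norm{\eta}_{L^2}$ (resp. $\norm{K(t,\cdot,\cdot)}_{L^2(\RRR^6)}$). Since $a(k)$ commutes with all $b$-space operators and $\int \d k\, \norm{a(\pm k)\zeta}^2 = \norm{\mathcal{N}_a^{1/2}\zeta}^2$, this produces precisely the factor $\mathcal{N}_a^{1/2}$, while the powers of $(\mathcal{N}_b+1)$ assemble into the right-hand sides of \eqref{eq: aux bound for H1} and \eqref{eq: aux bound for H2n and Sr}.

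The main obstacle is the creation instances ($a^\# = a^*$), for which pulling the norm inside the integral fails, because $\int \d k\, \norm{a^*(\pm k)\zeta}^2$ carries a divergent $\delta(0)$ coming from the commutator. Here the estimate must instead be carried out by expanding the squared norm and normal-ordering. For \eqref{eq: aux bound for H1} I would compute, using $a(\pm k) a^*(\pm k') = a^*(\pm k') a(\pm k) + \delta(k-k')$, that $\norm{\int \d k\, \eta(k)\, a^*(\pm k)\psi_k}^2$ equals $\int \d k\, \d k'\, \overline{\eta(k)}\eta(k')\scp{\psi_k}{a^*(\pm k') a(\pm k)\psi_{k'}} + \int \d k\, |\eta(k)|^2\norm{\psi_k}^2$. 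The diagonal term is bounded directly by $4\norm{\eta}_{L^2}^2\norm{\mathcal{N}_b\phi}^2$, while the normal-ordered term, moving one $a$ onto each side and applying the Cauchy--Schwarz inequality in $(k,k')$ together with the pull-through identity, is bounded by $4\norm{\eta}_{L^2}^2\norm{\mathcal{N}_a^{1/2}\mathcal{N}_b\phi}^2$; the two together give \eqref{eq: aux bound for H1}. The identical two-term split handles the $a^*$-contribution to \eqref{eq: aux bound for H2n and Sr}, with the extra $b^*(K(t,k,\cdot))$ now creating a $b$-excitation (hence a $(\mathcal{N}_b+1)^{1/2}$) and $\mathcal{N}_b^{n/2}$ commuted through to produce $(\mathcal{N}_b+1)^{(n+1)/2}$.

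Finally, the Hermitian-conjugate term in \eqref{eq: aux bound for H2n and Sr} is treated by applying the same scheme to the adjoint, which merely interchanges the roles of creation and annihilation operators; the only additional bookkeeping is to commute $\mathcal{N}_b^{n/2}$ past the annihilation operator $b(\,\cdot\,)$, producing $(\mathcal{N}_b-1)_+^{n/2} \le \mathcal{N}_b^{n/2}$, after which the diagonal/normal-ordered split applies verbatim. Collecting the $N$-independent constants from $\norm{\eta}_{L^2}$, $\sup_k\norm{A_k}_{\mathrm{op}}$, and the at most $n$-dependent combinatorial factors arising from the number-operator commutations yields the claimed constants $C$ and $C(n)$.
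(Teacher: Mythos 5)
Your proof is correct and follows essentially the same route as the paper: the paper's proof simply identifies the relevant one-body kernels, notes $\norm{f}_{\Hilbert\to\Hilbert\otimes\Hilbert}\le\norm{\eta}$ and $\norm{K}_{\HS}\le\norm{\eta}$, and delegates the ``straightforward estimation of creation and annihilation operators in terms of number operators'' to a citation of \cite[Lemma~5.1]{bossmann}, which is exactly the triangle-inequality/Cauchy--Schwarz argument for $a^{\#}=a$ and the normal-ordering split (diagonal $\delta(k-k')$ term plus off-diagonal term) for $a^{\#}=a^{*}$ that you carry out explicitly. Your write-up fills in the details the paper outsources, including the correct handling of the $\hc$ term via $\Number_b^{n/2}b(f)=b(f)(\Number_b-1)_+^{n/2}$, and arrives at the same constants governed by $\norm{\eta}_{L^2}$ and $\sup_k\norm{A_k}_{\mathrm{op}}$.
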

\begin{proof}
To derive \eqref{eq: aux bound for H1}, let us consider the contribution from the $q(t) e^{-2\pi ik \cdot} q(t)$ term first. Straightforwardly estimating the creation and annihilation operators in terms of number operators as in \cite[Lemma~5.1]{bossmann} gives
\begin{align}
\norm{\int \d x \int \d k\, b^{*}(x) \eta(k) (q(t) e^{-2\pi ik \cdot} q(t)) a^\#(k) b(x) \phi} \leq \norm{f}_{\Hilbert \to \Hilbert\otimes\Hilbert} \norm{(\Number_a+1)^{1/2} \Number_b \phi},
\end{align}
where $\Hilbert = L^2(\mathbb R^3)$ and $f: \Hilbert \to \Hilbert \otimes \Hilbert$ is the operator with kernel
\begin{equation}
f(x,k;y) = \int \d z\, \eta(k) q(x,z) e^{ikz} q(z,y).
\end{equation}
Thus, $\norm{f}_{\Hilbert \to \Hilbert\otimes\Hilbert} \leq \norm{\eta}$. The second term inside the norm on the left side of \eqref{eq: aux bound for H1} can be estimated in complete analogy. Since \eqref{eq: aux bound for H2n and Sr} is obtained in a similar way, we omit the details (here one needs to use $\norm{K}_{\rm HS} \leq \norm{\eta}$).
\end{proof}

The remainders in Equation~\eqref{Taylor_Ham} can be estimated in terms of number operators.
\begin{lemma}\label{Taylor_H_with_remainder}
For any $r \ge 1$ there is a constant $C(r)\ge 0$, such that
\begin{align} 
\big\| S^{(r)}_N(t) \phi \big\| &\leq C(r)N^{-(r+1)/2} ~ \big\| (\Number_a+1)^{1/2} (\Number_b+1)^{(r+2)/2} \phi \big\|
\end{align}
for all $\phi \in \mathcal F$. 
\end{lemma}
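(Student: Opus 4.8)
The plan is to reduce the estimate to a scalar functional-calculus bound on the Taylor remainder $\widetilde{R}$ composed with the particle number operator, after which the claim will follow from the very same creation/annihilation estimates that prove Lemma~\ref{lem_Ham_bounds}. Setting $s=\lfloor r/2\rfloor$, one reads off from the definitions \eqref{def of H_1}--\eqref{S_N even and odd} that for every $r\ge 1$ the remainder has the single structural form
\[
S_N^{(r)}(t) = M(t)\,\widetilde{R}_s\!\left(\tfrac{\Number_b}{N}\right) + \hc,
\qquad
M(t) = \int \d x \int \d k\, K(t,k,x)\big(a^*(k)+a(-k)\big)b^*(x),
\]
the extra term $N^{-1/2}H_1(t)$ present in $S_N^{(0)}(t)$ having already been subtracted in $S_N^{(1)}(t)$. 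Thus $S_N^{(r)}(t)$ coincides with the operator bounded in \eqref{eq: aux bound for H2n and Sr}, except that the explicit power $\Number_b^{n/2}$ is replaced by the function $\widetilde{R}_s(\Number_b/N)$ of the number operator, and I would first control this function by spectral calculus before feeding it into the operator estimate.

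The heart of the matter is the scalar bound on $\widetilde{R}_s$. Since the binomial coefficients obey $|c_n|\le C\,n^{-3/2}$, one has $\sum_n|c_n|<\infty$ and, for $x\in[0,1]$, $\widetilde{R}_s(x)=\sum_{n>s}c_n x^n$, whence $|\widetilde{R}_s(x)|\le C(s)\,x^{s+1}$. For $x\ge 1$ the positive part vanishes, $[1-x]_+=0$, so $\widetilde{R}_s(x)=-\sum_{n=0}^{s}c_n x^n$ is a polynomial of degree $s$ and $|\widetilde{R}_s(x)|\le C(s)\,x^{s}$. Substituting $x=n/N$ and comparing the regimes $n\le N$ and $n>N$ then yields, for all $n\in\NNN_0$, the two bounds
\[
\big|\widetilde{R}_s(n/N)\big|\le C(s)\,N^{-(s+1/2)}(n+1)^{s+1/2}
\qquad\text{and}\qquad
\big|\widetilde{R}_s(n/N)\big|\le C(s)\,N^{-(s+1)}(n+1)^{s+1},
\]
which by spectral calculus hold as operator inequalities for $\widetilde{R}_s(\Number_b/N)$ with $n$ replaced by $\Number_b$.

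To conclude I would write $S_N^{(r)}(t)=T+T^{*}$ with $T=M(t)\,\widetilde{R}_s(\Number_b/N)$. For $\|T\phi\|$ I would apply the single-$a^{\#}$/single-$b^{*}$ estimate underlying Lemma~\ref{lem_Ham_bounds} (which relies on $\norm{K(t)}_{\HS}\le 2\norm{\eta}$, uniformly in $t$) to obtain $\|M(t)\psi\|\le C\|(\Number_a+1)^{1/2}(\Number_b+1)^{1/2}\psi\|$, set $\psi=\widetilde{R}_s(\Number_b/N)\phi$, and absorb the remaining function of $\Number_b$ by the operator bounds above. For $\|T^{*}\phi\|$ I would first dominate the operator norm of $\widetilde{R}_s(\Number_b/N)$ by the relevant power of $(\Number_b+1)$ and then apply the same creation/annihilation estimate to $(\Number_b+1)^{p}M(t)^{*}$, after a routine shift $\Number_b\mapsto\Number_b-1$ of the number operator through the annihilation operator in $M(t)^{*}$. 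Choosing the first operator bound for even $r=2s$ and the second for odd $r=2s+1$ would produce, in both cases, exactly $C(r)\,N^{-(r+1)/2}\|(\Number_a+1)^{1/2}(\Number_b+1)^{(r+2)/2}\phi\|$.

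The one genuinely delicate point, and the step I expect to be the main obstacle, is the interplay of the two regimes of $\widetilde{R}_s$. The elementary Taylor estimate $|\widetilde{R}_s(x)|\lesssim x^{s+1}$ holds only on $[0,1]$; re-expressed through $x=n/N$ it gives merely the second bound above, which already closes the odd case but overshoots the even-case number-operator exponent by half a power, because $\Number_b/N$ is unbounded on $\mathcal{F}$. Recovering the sharp even-case exponent forces the large-argument estimate $|\widetilde{R}_s(x)|\lesssim x^{s}$, which is available precisely because the positive part truncates $\widetilde{R}_s$ to a degree-$s$ polynomial for $x>1$: in the high-excitation regime $\Number_b>N$ this trades half a power of $\Number_b$ for a factor $N^{1/2}$ and produces the first bound. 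The remaining operator manipulations are routine and identical in spirit to those in the proof of Lemma~\ref{lem_Ham_bounds}.
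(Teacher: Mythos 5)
Your proposal is correct and follows essentially the same route as the paper: the same splitting of the scalar remainder $\widetilde{R}_s$ into the regimes $x\le 1$ (where $|\widetilde{R}_s(x)|\lesssim x^{s+1}$) and $x\ge 1$ (where the positive part truncates it to a degree-$s$ polynomial), the same resulting pair of operator bounds $N^{-(s+1/2)}(\Number_b+1)^{s+1/2}$ and $N^{-(s+1)}(\Number_b+1)^{s+1}$ selected according to the parity of $r$, and the same reduction to the creation/annihilation estimate of Lemma~\ref{lem_Ham_bounds}. The only (harmless) difference is that you bound the remainder on $[0,1]$ via the absolutely convergent series $\sum_{n>s}c_nx^n$ with $|c_n|\lesssim n^{-3/2}$, whereas the paper uses the integral/mean-value form of Taylor's theorem.
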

\begin{proof}
For $x \in [0,1]$, Taylor's theorem gives $\sqrt{1-x} = \sum_{n=0}^{k} c_n x^n + \widetilde{R}_k(x)$,
with rest term
\begin{align}
\widetilde{R}_k(x) = (k+1)c_{k+1} \int_0^x (1-y)^{-k-1/2}(x-y)^k \d y = c_{k+1} (1-\xi)^{-k-1/2} x^{k+1} 
\end{align}
for some $\xi \in (0,x)$. For $x \in [0,1]$, we thus find that
\begin{align}\label{eq: R_k bound 1}
|\widetilde{R}_k(x)| \leq C(k) x^{k+1}.
\end{align}
For $x \geq 1$, we have $\widetilde{R}_k(x) := \sqrt{[1-x]_+} - \sum_{n=0}^{k} c_n x^n = -\sum_{n=0}^{k} c_n x^n$, therefore
\begin{align}\label{eq: R_k bound 2}
|\widetilde{R}_k(x)| \leq C(k) x^{k} ,
\end{align}
and hence $|\widetilde{R}_k(x)| \leq C(k) x^{k+1}$ for all $x\ge 0$. Moreover, combining \eqref{eq: R_k bound 1} and \eqref{eq: R_k bound 2} provides also $|\widetilde{R}_k(x)| \leq C(k) x^{k+\frac{1}{2}}$ for all $x \ge 0$.

Thus, with Lemma~\ref{lem_Ham_bounds} and $|\widetilde{R}_0(x)| \leq C(0) x^{\frac{1}{2}}$, we find
\begin{align}
\big\| S_{N}^{(0)}(t) \phi \big\|    &  \leq N^{-1/2} \norm{H_1(t) \phi}  + \norm{ \bigg( \int dx \int dk\,  K(t,k,x) \big( a^{*}(k) + a(-k) \big) b^*(x)  \widetilde{R}_0\left(\frac{\Number_b}{N}\right) + \hc \bigg) \phi} \nonumber\\
&\leq C N^{-1/2} \norm{(\Number_a+1)^{1/2} (\Number_b+1) \phi},
\end{align}
and, with $|\widetilde{R}_0(x)| \leq C(0) x$, we get
\begin{align}
\big\| S_{N}^{(1)}(t) \phi \big\| & =   \norm{ \bigg( \int dx \int dk\, K(t,k,x) \big( a^{*}(k) + a(-k) \big) b^*(x)  \widetilde{R}_0\left(\frac{\Number_b}{N}\right) + \hc \bigg) \phi} \nonumber\\
&\leq C N^{-1} \norm{(\Number_a+1)^{1/2} (\Number_b+1)^{3/2} \phi}.
\end{align}
Then, for $r$ even, we use Lemma~\ref{lem_Ham_bounds} and $|\widetilde{R}_k(x)| \leq C(k) x^{k+\frac{1}{2}}$, and find
\begin{align}
\norm { S_N^{(r)}(t) }  & = \norm{ \bigg( \int dx \int dk\,  K(t,k,x) \big( a^{*}(k) + a(-k) \big)  b^*(x) \widetilde{R}_{r/2}\left(\frac{\Number_b}{N} \right)  + \hc \bigg) \phi} \nonumber\\
&\leq N^{- (r+1) / 2} C(r) \norm{(\Number_a+1)^{1/2} (\Number_b+1)^{(r+2)/2} \phi}.
\end{align}
For $r$ odd, we employ $S_N^{(r)} = S^{(r-1)}_N$ and $|\widetilde{R}_k(x)| \leq C(k) x^{ k + 1 }$ to obtain
\begin{align}
\norm{S_{N}^{(r)}(t) \phi} = \norm{S_{N}^{(r-1)}(t) \phi} & = \norm{ \bigg( \int dx \int dk\, K(t,k,x) \big( a^{*}(k) + a(-k) \big)  b^*(x) \widetilde{R}_{(r-1)/2}\left(\frac{\Number_b}{N} \right)  + \hc \bigg) \phi} \nonumber\\
&\leq N^{-(r+1)/2}C((r+1)/2) \big\| (\Number_a+1)^{1/2} (\Number_b+1)^{(r+2)/2} \phi \big\|.
\end{align}

\end{proof}

\subsection{Propagation of Moments of Number Operators}

We now prove that moments of the number operator $\Number = \Number_a + \Number_b$ with respect to the state $\chi_{\ell}(0)$ can be propagated in time.

\begin{lemma}\label{lem_propagation_of_moments}
Let Assumption~\ref{main_number_op_assumption} hold. Then for all $n \in \mathbb N_0$ and $\ell \in \{0,...,r\}$ there is a $C(n,\ell) > 0$ such that
\begin{equation}
\norm{(\Number +1)^{n} \chi_{\ell}(t)} \leq C(n,\ell) e^{C(n,\ell)t}.
\end{equation}
\end{lemma}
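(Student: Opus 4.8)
The plan is to prove the moment bound by a Gronwall argument applied directly to the quantity $\|(\Number+1)^n \chi_\ell(t)\|^2$, proceeding by induction on $\ell \in \{0,\dots,r\}$. The base case $\ell = 0$ concerns the genuine Bogoliubov dynamics: since $\chi_0(t) = U_0(t,0)\chi_0(0)$, the desired bound $\|(\Number+1)^n \chi_0(t)\| \le C(n,0)e^{C(n,0)t}$ is exactly the content of the commutator estimate \eqref{eq:4x} (with $d\Gamma(1) = \Number$), combined with Assumption~\ref{main_number_op_assumption} for the initial datum. So the base case requires essentially no new work beyond invoking the well-posedness theorem of Section~\ref{sec:well-posedn-bogol}.

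For the inductive step, I would use the Duhamel representation \eqref{chil_def}, or equivalently the differential equation \eqref{chil_equation},
\begin{equation}
i\partial_t \chi_\ell(t) = H_0(t)\chi_\ell(t) + \sum_{m=0}^{\ell-1} H_{\ell-m}(t)\chi_m(t).
\end{equation}
Testing against $(\Number+1)^{2n}\chi_\ell(t)$ and taking imaginary parts, the leading term $H_0(t)\chi_\ell(t)$ contributes a commutator $[(\Number+1)^n, H_0(t)]$ which, by the same spectral-calculus argument behind \eqref{eq:4x}, is controlled by $C\,n\,\|(\Number+1)^n\chi_\ell(t)\|^2$ and thus feeds the Gronwall loop. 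The inhomogeneous terms involving $H_{\ell-m}(t)\chi_m(t)$ must be bounded by the inductive hypothesis on $\chi_m$ for $m<\ell$. This is where Lemma~\ref{lem_Ham_bounds} is the workhorse: the operators $H_1(t)$ and $H_{2p}(t)$ from \eqref{def of H_1}--\eqref{def of H_2n} satisfy bounds of the form $\|(\Number+1)^n H_{\ell-m}(t)\phi\| \le C\,\|(\Number_a+1)^{1/2}(\Number_b+1)^{(\ell-m+1)/2}(\Number+1)^n\phi\|$, since each $H_j$ raises the number of excitations by a bounded amount and commuting $(\Number+1)^n$ through $H_j$ only shifts the exponent by a fixed constant. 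Hence each inhomogeneous term is bounded by $C\,\|(\Number+1)^{n+(\ell-m+1)/2}\chi_m(t)\|$, which by the inductive hypothesis (applied at the \emph{higher} moment order $n+(\ell-m+1)/2$, legitimate since the hypothesis covers all $n\in\NNN_0$) is $\le C\,e^{Ct}$.

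The Gronwall inequality then reads $\partial_t\|(\Number+1)^n\chi_\ell(t)\|^2 \le C\,\|(\Number+1)^n\chi_\ell(t)\|^2 + C\,e^{Ct}\|(\Number+1)^n\chi_\ell(t)\|$, whose solution yields the claimed exponential bound. The main obstacle I anticipate is the bookkeeping of number-operator powers: one must check that commuting $(\Number+1)^n$ past the creation/annihilation structure in $H_0$ and in the $H_j$ produces only \emph{lower-order} corrections in $\Number$ and that the $\Number_a$ versus $\Number_b$ split in Lemma~\ref{lem_Ham_bounds} can be absorbed into powers of the total $\Number$. A secondary technical point is rigor: since the bounds are on the full domain, I would first establish the estimate on a regularized dynamics (for instance by truncating with spectral projections $\id_{\Number \le M}$ or by working with the cutoff Hamiltonian $H(t)$ on $\Gock_{\le N}$ where $\Number_b \le N$ makes everything bounded) and then remove the regularization by a limiting argument, exactly as the proof of \eqref{eq:4x} in Section~\ref{sec:well-posedn-bogol} passes through differentiable quantities $M_r(t,t_0)$ and extends by density. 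This ensures all the formal manipulations with $(\Number+1)^n$ are justified.
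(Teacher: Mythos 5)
Your proposal is correct in outline but takes a genuinely different route from the paper. The paper does not differentiate $\|(\Number+1)^n\chi_\ell(t)\|^2$ and does not use Gronwall or induction on $\ell$ at this stage: it applies $(\Number+1)^n$ directly to the closed-form iterated Duhamel expansion \eqref{chil_def}, propagates the moment through each factor $U_0$ via the already-established bound \eqref{eq:4x}, absorbs each insertion of $\widetilde H_{\alpha_i}$ via Lemma~\ref{lem_Ham_bounds} (after commuting the number operators through, exactly as you describe), and thereby reduces everything to moments of the \emph{initial} data $\chi_m(0)$, which are finite by Assumption~\ref{main_number_op_assumption}; the worst case is $\alpha=(1,\dots,1)$, costing $3(\ell-m)/2$ extra powers of $\Number$. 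Your ODE-plus-Gronwall scheme with induction on $\ell$ (quantified over all $n$, so that you may invoke the hypothesis at higher moment order) is a legitimate alternative and relies on the same two ingredients, but it buys you nothing here and costs more: since the inhomogeneity in \eqref{chil_equation} involves only $\chi_m$ with $m<\ell$, even in the mild formulation no Gronwall loop is needed once \eqref{eq:4x} is available, and your differential formulation additionally requires justifying that $\chi_\ell(t)$ solves \eqref{chil_equation} strongly and that $t\mapsto\|(\Number+1)^n\chi_\ell(t)\|^2$ is differentiable — the regularization you sketch (spectral cutoffs, density) would indeed handle this, but the paper's direct estimate on \eqref{chil_def} sidesteps the issue entirely. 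Your commutator bound for the $H_0$ term is the positive-power analogue of the spectral-calculus inequality used in the proof of \eqref{eq:4x}, and your exponent bookkeeping for $H_{2j}$ ($(\Number_b+1)^{(\ell-m+1)/2}$ together with $(\Number_a+1)^{1/2}$) is consistent with Lemma~\ref{lem_Ham_bounds}, so no step of your argument would actually fail.
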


\begin{proof}
By the definition of $\chi_{\ell}(t)$ from \eqref{chil_def},
\begin{align}\label{number_op_propagation_terms}
&\norm{(\Number + 1 )^{n} \chi_{\ell}(t)} \nonumber\\
&\quad\leq \norm{ (\Number +1 )^{n} U_0(t,0) \chi_{\ell}(0)} \nonumber\\
&\qquad + \sum_{m=0}^{\ell-1} \sum_{k=1}^{\ell-m} \sum_{\substack{\alpha \in \mathbb{N}^k \\ |\alpha| = \ell - m}} \int_{\Delta_k} ds^{(k)} \norm{ ( \Number + 1 )^{n} \prod_{i=1}^k \widetilde{H}_{\alpha_i }(s_i,t) U_{0}(t,0) \chi_m(0)}.
\end{align}
Then by the number operator bound in \eqref{eq:4x} and by Assumption~\ref{main_number_op_assumption},
\begin{align}
\norm{(\Number +1)^{n} U_{0}(t,0) ]\chi_{\ell}(0)} \leq C(n,\ell) e^{C(n)t}.
\end{align}
Next, let us estimate the term
\begin{equation}\label{term_number_est}
\norm{(\Number +1)^{n} \widetilde{H}_{2j}(s,t) U_{0}(t,0) \phi} = \norm{ (\Number +1)^{n} U_{0}(t,s) H_{2j}(s) U_{0}(s,0) \phi}
\end{equation}
for any $j=\frac{1}{2}$ or $j\in\NNN$, and $\phi \in \Fock$. In the following, we use \eqref{eq:4x} in the first step, then commute $H_{2j}(s)$ with the number operators and use Lemma~\ref{lem_Ham_bounds} in the second step, and use \eqref{eq:4x} again in the third step; we find
\begin{align}
\eqref{term_number_est} &\leq e^{C(n)|t-s|} \norm{ ( \Number + 1 )^{n} H_{2j}(s) U_{0}(s,0) \phi} \nonumber\\
&\leq C(n ,j) e^{C(n)|t-s|} \norm{ ( \Number + 1 )^{n+j+1}  U_{0}(s,0) \phi} \nonumber\\
&\leq C(n ,j) e^{C(n)|t-s|}e^{C(n,j)|s|} \norm{(\Number + 1 )^{n+ j +1} \phi}.
\end{align}
Finally, note that the term with the highest number of creation and annihilation operators in the last line of \eqref{number_op_propagation_terms} for given $m$ comes from $k=\ell-m$, i.e., $\alpha = (1,1,\ldots,1)$. Thus, 
\begin{align}
&\sum_{m=0}^{\ell-1} \sum_{k=1}^{\ell-m} \sum_{\substack{\alpha \in \mathbb{N}^k \\ |\alpha| = \ell - m}} \int_{\Delta_k} ds^{(k)} \norm{(\Number + 1 )^{n}  \prod_{i=1}^k \widetilde{H}_{\alpha_i }(s_i,t) U_{0}(t,0) \chi_m(0)} \nonumber\\
&\quad \leq \sum_{m=0}^{\ell-1} C(n,m) e^{C(n,m)t} \norm{(\Number + 1 )^{n + 3(\ell-m)/2} \chi_m(0)},
\end{align}
and the lemma is proven by Assumption~\ref{main_number_op_assumption}.
\end{proof}

\subsection{Proof of the Theorems}

\begin{proof}[Proof of Theorems~\ref{thm_Bog} and \ref{thm_higher_orders}]
We define the difference
\begin{equation}
\chi^{\mathrm{rest}}_r(t) : = \chi_{\leq N}(t) - \sum_{\ell=0}^r N^{-\ell/2} \chi_{\ell}(t) \in \Gock,
\end{equation}
where we extend the state $\chi_{\le N}\in \Gock_{\le N}$ to a state in $\Gock$ by setting $\chi_{\le N}^{(k)}=0$ for all $k\ge N+1$. Then
\begin{equation}
\norm{\Psi_N(t) - \Psi_N^{(r)}(t)} = \norm{\chi_{\leq N}(t) - \sum_{\ell=0}^r N^{-\ell/2} \chi_{\ell}(t)}_{\Gock_{\leq N}} = \norm{ \chi^{\mathrm{rest}}_r (t)}_{\Gock_{\leq N}},
\end{equation}
where $\|\phi\|_{\Gock_{\leq N}} := \|\phi|_{\Gock_{\leq N}}\|$  (which defines only a semi-norm). Now note that
\begin{align}
\chi^{\mathrm{rest}}_r (t) = U_0(t,0) \chi^{\mathrm{rest}}_r  (0) - i \int_0^t \d s\, U_{0}(t,s) F(s),
\end{align}
with
\begin{align}
F(s) & :=\big(H^{\leq N}(s) - H_0(s)\big) \chi_{\leq N}(s) - \sum_{\ell=0}^r N^{-\ell/2} \sum_{m=1}^\ell H_{m}(s) \chi_{\ell-m}(s) \nonumber\\
& = \big(H^{\leq N}(s) - H_0(s)\big)  \chi^{\mathrm{rest}}_r  (s) + \sum_{\ell=0}^r N^{-\ell/2} \left( \big(H^{\leq N}(s) - H_0 (s)\big) \chi_\ell(s) - \sum_{m=1}^\ell H_m (s) \chi_{\ell-m}(s) \right) \nonumber\\
& = \big(H^{\leq N}(s) - H_0(s)\big)  \chi^{\mathrm{rest}}_r(s) + \sum_{\ell=0}^r N^{-\ell/2} \left( H^{\leq N}(s) - \sum_{m=0}^{r-\ell} N^{-m/2}H_m (s) \right) \chi_\ell(s),
\end{align}
where we reordered the summation in the last step. Then
\begin{align}\label{rest_comp_1}
&\norm{ \chi^{\mathrm{rest}}_r (t)}^2_{\Gock_{\leq N}} \nonumber\\
& = \norm{\chi^{\mathrm{rest}}_r  (0)}^2_{\Gock_{\leq N}} + 2\Im \int_0^t \d s\, \SCP{U_0 (s,0) \chi^{\mathrm{rest}}_r  (0)}{F(s)}_{\Gock_{\leq N}} + \int_0^t \d s \int_0^t \d \tilde s \, \SCP{U_0(0,\tilde s )F(\tilde s )}{U_0(0,s)F(s)}_{\Gock_{\leq N}} \nonumber\\
& = \norm{\chi^{\mathrm{rest}}_r(0)}^2_{\Gock_{\leq N}} + 2 \Im \int_0^t \d s\, \SCP{\chi^{\mathrm{rest}}_r(s)}{F(s)}_{\Gock^{\leq N}} - 2\Re \int_0^t \d s \int_0^s \d \tilde s \, \SCP{U_ 0(0,\tilde s )F(\tilde s )}{U_0 (0,s)F(s)}_{\Gock_{\leq N}} \nonumber\\
& \quad + \int_0^t \d s \int_0^t \d \tilde s \, \SCP{U_0(0,\tilde s)F(\tilde s )}{U_0(0,s)F(s)}_{\Gock_{\leq N}} \nonumber\\
&= \norm{\chi^{\mathrm{rest}}_r(0)}^2_{\Gock_{\leq N}} + 2 \Im \int_0^t \d s\, \SCP{\chi^{\mathrm{rest}}_r(s)}{F(s)}_{\Gock_{\leq N}} \nonumber\\
&= \norm{\chi^{\mathrm{rest}}_r(0)}^2_{\Gock_{\leq N}} + 2 \sum_{\ell=0}^r N^{-\ell/2} \Im \int_0^t \d s\, \SCP{\chi^{\mathrm{rest}}_r(s)}{\left( H^{\leq N}(s) - \sum_{m=0}^{r-\ell} N^{-m/2} H_m (s) \right) \chi_\ell(s)}_{\Gock_{\leq N}},
\end{align}
using self-adjointness of $H^{\leq N}(s) - H^{(0)}(s)$ in the last step. By the definition \eqref{Taylor_Ham} of the rest term and using the Cauchy--Schwarz inequality, we find
\begin{align}\label{rest_comp_2}
\norm{\chi^{\mathrm{rest}}_r(t)}^2_{\Gock_{\leq N}} &= \norm{\chi^{\mathrm{rest}}_r(0)}^2_{\Gock_{\leq N}} + 2 \sum_{\ell=0}^r N^{-\ell / 2 } \Im \int_0^t \d s\, \SCP{\chi^{\mathrm{rest}}_r(s)}{S^{(r-\ell)}(s) \chi_\ell(s)}_{\Gock_{\leq N}} \nonumber\\
&\leq \norm{\chi^{\mathrm{rest}}_r(0)}^2_{\Gock_{\leq N}} + 2 \sum_{\ell=0}^r N^{- \ell /2  } \int_0^t \d s\, \norm{\chi^{\mathrm{rest}}_r(s)}_{\Gock_{\leq N}} \norm{S^{(r-\ell)}(s) \chi_\ell(s)}_{\Gock_{\leq N}}.
\end{align}
Now we use the estimate of the remainder in terms of number operators from Lemma~\ref{Taylor_H_with_remainder}, and the estimate of moments of number operators from Lemma~\ref{lem_propagation_of_moments}, which yield
\begin{align}\label{proof_conclusion_1}
&\norm{\chi^{\mathrm{rest}}_r(t)}^2_{\Gock_{\leq N}} - \norm{\chi^{\mathrm{rest}}_r(0)}^2_{\Gock_{\leq N}} \nonumber\\
&\quad\leq 2 N^{-(r+1)/2} \int_0^t \d s\, \norm{\chi^{\mathrm{rest}}_r(s)}_{\Gock_{\leq N}} \sum_{\ell=0}^r C(r-\ell) \norm{(\Number_a+1)^{1/2} (\Number_b+1)^{(r-\ell+2)/2} \chi_{\ell}(s)} \nonumber\\
&\quad\leq N^{-(r+1)/2} \int_0^t \d s\, \norm{\chi^{\mathrm{rest}}_r(s)}_{\Gock_{\leq N}} C(r) e^{C(r)s}  \nonumber\\
&\quad\leq \int_0^t \d s\, \left( \frac{1}{2} N^{-r-1} C(r)^2 e^{2C(r)s} + \frac{1}{2} \norm{\chi^{\mathrm{rest}}_r(s)}^2_{\Gock_{\leq N}} \right).
\end{align}
Then Gronwall's lemma implies
\begin{align}\label{proof_conclusion_2}
\norm{\chi^{\mathrm{rest}}_r(t)}^2_{\Gock_{\leq N}} \leq C(r) e^{C(r) t} \left( \norm{\chi^{\mathrm{rest}}_r(0)}^2_{\Gock_{\leq N}} + N^{-r-1} \right).
\end{align}
\end{proof}

\begin{proof}[Proof of Theorem~\ref{thm_higher_orders_unitary_group}]
We abbreviate
\begin{align}
\widetilde{\chi}^{\mathrm{rest}}_r(t,s) = U(t,s) \chi - \sum_{\ell = 0}^r N^{-\ell/2} U_{\ell}(t,s) \chi.
\end{align}
A computation analogous to \eqref{rest_comp_1} and \eqref{rest_comp_2} yields
\begin{gather}
\begin{align}
\norm{\widetilde{\chi}^{\mathrm{rest}}_r(t,s)}^2 - \norm{\widetilde{\chi}^{\mathrm{rest}}_r( t_0 ,s)}^2 &= 2 \sum_{\ell=0}^r N^{- \ell /2 } \Im \int_{t_0}^t \d \tilde{s}\, \SCP{\widetilde{\chi}^{\mathrm{rest}}_r(\tilde{s},s)}{S_N^{(r-\ell)}(\tilde{s}) U_{\ell}(\tilde{s},s) \chi} \nonumber\\
&\leq 2 \sum_{\ell=0}^r N^{-\ell /2 } \int_{ t_0 }^t \d \tilde{s}\, \norm{\widetilde{\chi}^{\mathrm{rest}}_r(\tilde{s},s)} \norm{S_N^{(r-\ell)}(\tilde{s}) U_{\ell}(\tilde{s},s) \chi}.
\end{align}
\end{gather}
The rest term is bounded in terms of number operators according to Lemma~\ref{Taylor_H_with_remainder}. Furthermore, by the same computations as in the proof of Lemma~\ref{lem_propagation_of_moments}, we deduce that
\begin{equation}
\norm{(\Number +1)^{n}  U_{\ell}(t,s) \chi} \leq C(n,\ell) e^{C(n, \ell)(|t|+|s|)},
\end{equation}
for all $n \in \NNN_0$ and for $\chi$ satisfying our assumption \eqref{ass_number_unitary}. Then the proof is concluded as in \eqref{proof_conclusion_1} and \eqref{proof_conclusion_2} in the proof of Theorem~\ref{thm_higher_orders}.
\end{proof}

\appendix

\section{More details on the excitation Fock spaces}
\label{section: more details on the excitation Fock space}
In this appendix we give further details about the unitary, defined by \eqref{decomp_Psi}, and the derivation of the excitation Hamiltonian from \eqref{full_H}. To this end, we closely follow \cite[Chapter 4.1]{Lewin:2015a} and \cite[Chapter 2.3]{LNSS2015}.
First note that the unitary Weyl operator $W^{*} \left( \sqrt{N} \alpha(t) \right)$ maps the Fock space $\mathcal{F} = \bigoplus_{n=0}^{\infty} (L^2(\mathbb{R}^3))^{\otimes_s n}$ into itself. Under this mapping the coherent state $W(\sqrt{N} \alpha(t)) \ket{\Omega}$ is mapped to the vacuum vector. 
Second we recall that any function $\Psi \in (L^2(\mathbb{R}^3))^{\otimes_s N}$ can be written as
\begin{align}
\Psi
&= \psi^{(0)} \varphi(t)^{\otimes N} 
+ \psi^{(1)} \otimes_s \varphi(t)^{\otimes(N-1)}
+ \psi^{(2)} \otimes_s \varphi(t)^{\otimes (N-2)} + \ldots + \psi^{(N)}.
\end{align}
One way to obtain this decomposition is to introduce for $k \in \{0, 1, \ldots, N \}$ the operators
\begin{align}
P_{N,k} 
&= \sum_{\substack{a \in \{0,1 \}^N \\ \sum_{i} a_i = k}} \prod_{i = 1}^{N} p_i^{1 - a_i} q_i^{a_i}
= \frac{1}{k! (N-k)!}  \sum_{\sigma \in S_N}  q_{\sigma(1)} \cdots q_{\sigma(k)} 
p_{\sigma(k+1)} \cdots p_{\sigma(N)} 
\end{align}
with $p_i(t) = \ket{\varphi(t)} \bra{\varphi(t)}_i$ and $q_i(t) = 1 - p_i(t)$
satisfying the identity $\sum_{k =0}^{N} P_{N,k} = \id_{L^2(\mathbb{R}^{3N})}$ (see, e.g., \cite[Section 3.3.1]{KnowlesP2009}). Then
$\Psi = \sum_{k=0}^N P_{N,k} \Psi$ where $P_{N,k} \Psi = \psi^{(k)} \otimes_s \varphi^{\otimes (N-k)}$ with
\begin{align}
\psi^{(k)}
&= \binom{N}{k}^{1/2}  \prod_{i=1}^k q_i  \scp{\varphi^{\otimes (N-k)}}{\Psi}_{L^2(\mathbb{R}^{3(N-k)})} .
\end{align}
The mapping $\widetilde{U}_N(t): (L^2(\mathbb{R}^3))^{\otimes_s N} \rightarrow \bigoplus_{k=0}^N ( \varphi(t)^\perp)^{\otimes_s k}$, $\Psi \mapsto \left( \psi^{(k)}(t) \right)_{k=0}^N$ is unitary because 
\begin{align}
\left< \psi^{(k)} \otimes_s \varphi(t)^{\otimes (N-k)} , \psi^{(l)} \otimes_s \varphi(t)^{\otimes (N-l)} \right>_{L^2(\mathbb{R}^{3N})}
= \delta_{k,l} \left< \psi^{(k)} , \psi^{(l)} \right>_{L^2(\mathbb{R}^{3k})} 
\end{align}
and therefore
$\norm{\Psi}_{L^2(\mathbb{R}^{3N})}^2 = \left| \psi^{(0)} \right|^2 + \sum_{k=1}^N \norm{\psi^{(k)}}^2_{L^2(\mathbb{R}^{3k})}$.

Let $U_N(t): \mathcal{H}_N \rightarrow \left( \bigoplus_{k=0}^N ( \varphi(t)^\perp)^{\otimes_s k} \right) \otimes \mathcal{F}$ be the unitary operator defined by $U_N(t)= \widetilde{U}_N(t) \otimes W^{*} \left( \sqrt{N} \alpha(t) \right)$. In analogy to \cite[Chapter 4]{LNSS2015} and \cite[Chapter 4.1]{Lewin:2015a} we can use the inclusions 
$ \mathcal{H}_N = (L^2(\mathbb{R}^3))^{\otimes_s N} \otimes \mathcal{F} \subset \mathcal{F} \otimes \mathcal{F}$ and
$\left( \bigoplus_{k=0}^N ( \varphi(t)^\perp)^{\otimes_s k} \right) \otimes \mathcal{F}_a \subset \mathcal{F} \otimes \mathcal{F}$ to represent $U_N(t)$ and its adjoint in terms of annihilation and creation operators. To this end,  we use $b(f) \otimes \id_{\mathcal{F}}, b^{*}(f) \otimes \id_{\mathcal{F}}, \mathcal{N}_b \otimes \id_{\mathcal{F}}$ and $\id_{\mathcal{F}} \otimes a(f), \id_{\mathcal{F}} \otimes a^{*}(f), \id_{\mathcal{F}} \otimes \mathcal{N}_a$ with $f \in L^2(\mathbb{R}^3)$ to denote the usual annihilation, creation and number of particles operators on the first and second Fock space of $\mathcal{F} \otimes \mathcal{F}$.
\begin{lemma}
\label{lemma:properties U}
The operators $U_N(t)$ and $U_N(t)^*$ can equivalently be written as 
\begin{align}
U_N(t) \Psi_N &= W^{*} \left( \sqrt{N} \alpha(t) \right) \bigoplus_{k=0}^N q(t)^{\otimes k} \frac{b(\varphi(t))^{N- k}}{\sqrt{(N- k)!}} \Psi_N ,
\\
U_N(t)^* \left( \bigoplus_{k=0}^N \chi_{\leq N}^{(k)} \right)
&= W \left( \sqrt{N} \alpha(t) \right) \sum_{k=0}^N \frac{b^{*} (\varphi(t))^{N-k}}{\sqrt{(N-k)!}} \chi_{\leq N}^{(k)}
\end{align}
for all $\Psi_N \in\mathcal{H}_N$ and $ \chi_{\leq N}^{(k)} \in  ( \varphi(t)^\perp)^{\otimes_s k} \otimes \mathcal{F}$, $k = 0, \ldots, N$. 
On $\left( \bigoplus_{k=0}^N ( \varphi(t)^\perp)^{\otimes_s k} \right) \otimes \mathcal{F}$ we have
\begin{subequations}
\begin{align}
U_N(t) b^{*}(\varphi(t)) b(\varphi(t) ) U_N(t)^*  & = N - \left(\mathcal{N}_b(t) \right)_{+}  ,
\\
U_N(t) b^{*}(f) b(\varphi(t) ) U_N(t)^*  & = b^{*}(f) \left[ N - \left(\mathcal{N}_b(t) \right)_{+} \right]^{1/2}  ,
\\
U_N(t) b^{*}(\varphi(t)) b(f) U_N(t)^*  & =  \left[N - \left(\mathcal{N}_b(t) \right)_{+} \right]^{1/2} b(f)  ,
\\
U_N(t) b^{*}(f) b(g) U_N(t)^*  & =  b^{*}(f) b(g)  ,
\end{align}
\end{subequations}
for all $f,g \in \{\varphi(t)\}^{\perp}$ and $\left(\mathcal{N}_b(t) \right)_{+} = \mathcal{N}_b - b^{*}(\varphi(t)) b(\varphi(t))$.
Moreover,
\begin{subequations}
\begin{align}
\label{eq: action unitary mapping on a}
U_N(t) a(h) U_{N}(t)^* = a(h) + \sqrt{N} \scp{h}{\alpha(t)}_{L^2(\mathbb{R}^3)} ,
\\
\label{eq: action unitary mapping on a dagger}
U_N(t) a^{*}(h) U_{N}(t)^* = a^{*}(h) + \sqrt{N} \scp{\alpha(t)}{h}_{L^2(\mathbb{R}^3)},
\end{align}
\end{subequations}
for all $h \in L^2(\mathbb{R}^3)$.
\end{lemma}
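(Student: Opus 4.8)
The plan is to prove the three groups of identities by separate mechanisms. I would begin with the field shift formulas \eqref{eq: action unitary mapping on a}--\eqref{eq: action unitary mapping on a dagger}, which are the simplest: the field operators $a(h), a^*(h)$ and the factor $\widetilde{U}_N(t)$ act on different tensor factors and hence commute, so that $U_N(t) a(h) U_N(t)^* = W^*(\sqrt{N}\alpha(t))\, a(h)\, W(\sqrt{N}\alpha(t))$. Since $[a(h), a^*(g)] = \scp{h}{g}$ is a c-number, the canonical Weyl shift $W(g)^* a(k) W(g) = a(k) + g(k)$ holds (equivalently via Baker--Campbell--Hausdorff), and integrating against $\overline{h}$ with $g = \sqrt{N}\alpha(t)$ yields $a(h) + \sqrt{N}\scp{h}{\alpha(t)}$; the adjoint gives the companion $a^*(h)$ identity.

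Next I would establish the representation formulas for $U_N(t)$ and $U_N(t)^*$. The key point is that iterated condensate annihilation realizes the partial inner product appearing in the appendix: for $\Psi \in (L^2(\RRR^3))^{\otimes_s N} \subset \Fock$ one has $b(\varphi(t))^{N-k}\Psi = \sqrt{N!/k!}\,\scp{\varphi(t)^{\otimes(N-k)}}{\Psi}_{L^2(\RRR^{3(N-k)})}$, obtained by applying $b(\varphi(t))$ one particle at a time. Dividing by $\sqrt{(N-k)!}$ produces the binomial $\binom{N}{k}^{1/2}$, and applying $q(t)^{\otimes k} = \prod_{i=1}^k q_i(t)$ reproduces exactly the formula for $\psi^{(k)}(t)$. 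Tensoring with $W^*(\sqrt{N}\alpha(t))$ gives the claimed form of $U_N(t)$, and the formula for $U_N(t)^*$ follows by taking adjoints, using $(b(\varphi(t))^{N-k})^* = b^*(\varphi(t))^{N-k}$ and that $q(t)^{\otimes k}$ acts as the identity on $(\varphi(t)^\perp)^{\otimes_s k}\otimes\Fock$.

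The core of the lemma is the conjugation of the particle bilinears, which I would verify by a direct computation on each fixed excitation sector $\chi^{(k)} \in (\varphi(t)^\perp)^{\otimes_s k}\otimes\Fock$ using the representation just derived. Two elementary facts drive everything: $W^*(\sqrt{N}\alpha(t))$ commutes with all $b^\#$ operators, and $[b(f), b^*(\varphi(t))] = \scp{f}{\varphi(t)}$ vanishes whenever $f \perp \varphi(t)$, together with $b(\varphi(t))\chi^{(k)} = 0$ and $[b(\varphi(t)), b^*(\varphi(t))^n] = n\, b^*(\varphi(t))^{n-1}$. For $f,g \perp \varphi(t)$ the operator $b^*(f)b(g)$ passes through the condensate factor unchanged, giving the last identity immediately. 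For the mixed terms one commutes $b(\varphi(t))$ or $b^*(\varphi(t))$ through $b^*(\varphi(t))^{N-k}/\sqrt{(N-k)!}$: this shifts the sector from $k$ to $k\mp 1$ and, after matching factorials against the normalization of $U_N(t)^*$ on the new sector, produces the scalar $\sqrt{N-k}$ or $\sqrt{N-k+1}$, which is precisely the eigenvalue of $[N-(\Number_b(t))_+]^{1/2}$ on the relevant sector, recalling that $(\Number_b(t))_+ = \Number_b - b^*(\varphi(t))b(\varphi(t))$ counts the excited particles and equals $\Number_b$ on the excitation space. Setting $f = \varphi(t)$ collapses the two square roots into $N - (\Number_b(t))_+$ and yields the diagonal identity. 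These manipulations are exactly the excitation-map computations of \cite[Chapter 2.3]{LNSS2015} and \cite[Chapter 4.1]{Lewin:2015a}, adapted to the present tensor-product setting.

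I expect the only genuine subtlety to be the bookkeeping of the square-root factors together with the verification that every manipulation stays inside the truncated space $\big(\bigoplus_{k=0}^N(\varphi(t)^\perp)^{\otimes_s k}\big)\otimes\Fock$. In particular, the top sector $k=N$ must be checked separately: there $b(\varphi(t))$ annihilates the empty condensate, so the factor $\sqrt{N-k}$ vanishes, which is exactly what prevents an excitation sector of order higher than $N$ from being populated and accounts for the cutoff $[\,\cdot\,]_+$. All remaining steps are routine combinatorial and symmetry identities.
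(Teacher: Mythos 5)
Your proposal is correct and follows essentially the same route as the paper: the paper's proof simply observes that $\widetilde{U}_N(t)\otimes\id$ and $\id\otimes W^*(\sqrt{N}\alpha(t))$ commute, cites \cite[Proposition 4.2]{LNSS2015} for the particle-excitation identities, and invokes the Weyl shift property for \eqref{eq: action unitary mapping on a}--\eqref{eq: action unitary mapping on a dagger}. You merely write out explicitly the sector-by-sector commutator computations that the cited proposition contains (including the correct $\sqrt{N-k}$ bookkeeping and the $k=N$ edge case), so there is no substantive difference in approach.
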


\begin{proof}
The first part of the Lemma is a direct consequence of $\left[ \widetilde{U}_N(t) \otimes \id_{\mathcal{F}} , \id_{\mathcal{F}} \otimes W^{*} \left( \sqrt{N} \alpha(t) \right) \right] = 0$ and \cite[Proposition 4.2]{LNSS2015}. The relations \eqref{eq: action unitary mapping on a} and \eqref{eq: action unitary mapping on a dagger} follow from the shifting property of the Weyl operators.
\end{proof}

In the following, we briefly explain how one derives the Schr\"odinger equation \eqref{Schr_eq_Fock}. To this end, we need to compute the time derivative of the unitary mapping.

\begin{lemma}
\label{lemma: time evolution unitary}
Let $(\varphi(t), \alpha(t))$ be a sufficiently regular trajectory on $L^2(\mathbb{R}^3) \oplus L^2(\mathbb{R}^3)$ satisfying $\norm{\varphi(t)}_{L^2(\mathbb{R}^3)} = \norm{\varphi(0)}_{L^2(\mathbb{R}^3)}$ for all $t \geq 0$. Then the time derivative of $U_N(t)$ is
\begin{align}
\begin{split}
i \dot{U}_N(t)
&= \bigg[ b^{*} \left( \varphi(t) \right) b \left( q(t) i \dot{\varphi}(t) \right) 
- \sqrt{N - \left(\mathcal{N}_b(t) \right)_{+}} b \left( q(t) i \dot{\varphi}(t) \right)
\\
&\quad - b^{*} \left( q(t) i \dot{\varphi}(t) \right) \sqrt{N - \left(\mathcal{N}_b(t) \right)_{+}}
- \scp{i \dot{\varphi}(t)}{\varphi(t)} \left( N - \left(\mathcal{N}_b(t) \right)_{+} \right)
 \\
&\quad - N \, \Re \scp{i \dot{\alpha}(t)}{\alpha(t)}
- \sqrt{N} a \left( i \dot{\alpha}(t) \right)
- \sqrt{N} a^{*} \left( i \dot{\alpha}(t) \right)
\bigg] U_N(t) .
\end{split}
\end{align}
\end{lemma}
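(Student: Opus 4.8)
The plan is to differentiate the factorization $U_N(t)=\widetilde U_N(t)\otimes W^{*}(\sqrt N\alpha(t))$ supplied by Lemma~\ref{lemma:properties U} with the Leibniz rule,
\[
i\dot U_N(t)=\big(i\dot{\widetilde U}_N(t)\big)\otimes W^{*}(\sqrt N\alpha(t))+\widetilde U_N(t)\otimes\big(i\tfrac{d}{dt}W^{*}(\sqrt N\alpha(t))\big),
\]
and to treat the two summands separately. Because $\widetilde U_N$ acts on the particle factor while the field operators $a^{\#}$ act on the field factor, the generator produced by each summand commutes with the unitary living on the other slot; hence in both cases I can move the generator to the left and assemble the result in the stated form $(\cdots)\,U_N(t)$.

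First I would dispose of the field contribution, which yields the last three terms. Since $[\,a^{*}(f)-a(f),\,a^{*}(g)-a(g)\,]=\scp{g}{f}-\scp{f}{g}$ is a c-number, the Duhamel identity $\tfrac{d}{dt}e^{A(t)}=\int_0^1 e^{sA}\dot A\,e^{(1-s)A}\,ds$ collapses (for central $[A,\dot A]$) to $\big(\dot A+\tfrac12[A,\dot A]\big)e^{A}$, giving
\[
\tfrac{d}{dt}W^{*}(\sqrt N\alpha)=\big(-a^{*}(\sqrt N\dot\alpha)+a(\sqrt N\dot\alpha)+i\,\Im\scp{\sqrt N\dot\alpha}{\sqrt N\alpha}\big)W^{*}(\sqrt N\alpha).
\]
Multiplying by $i$ and rewriting through the linearity of $a^{*}$ and antilinearity of $a$ in terms of $i\dot\alpha$, together with $\Re\scp{i\dot\alpha}{\alpha}=\Im\scp{\dot\alpha}{\alpha}$, turns this into $\big(-N\Re\scp{i\dot\alpha}{\alpha}-\sqrt N a(i\dot\alpha)-\sqrt N a^{*}(i\dot\alpha)\big)W^{*}(\sqrt N\alpha)$, i.e. exactly the three field terms of the claim.

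Next I would compute the particle generator $i\dot{\widetilde U}_N\,\widetilde U_N^{*}$ by differentiating the explicit expression $\widetilde U_N=\bigoplus_{k}q(t)^{\otimes k}\,b(\varphi(t))^{N-k}/\sqrt{(N-k)!}$ from Lemma~\ref{lemma:properties U}. The derivative acts on the projections, via $\dot q=-\dot p=-(\ketbra{\dot\varphi}{\varphi}+\ketbra{\varphi}{\dot\varphi})$, and on the factors $b(\varphi)^{N-k}$, via $\tfrac{d}{dt}b(\varphi)=b(\dot\varphi)$. I would then split $\dot\varphi=q\dot\varphi+\scp{\varphi}{\dot\varphi}\,\varphi$ and invoke the conservation $\norm{\varphi(t)}=\norm{\varphi(0)}$, which forces $\Re\scp{\varphi}{\dot\varphi}=0$, so the longitudinal part $\scp{\varphi}{\dot\varphi}\varphi$ feeds only a diagonal contribution. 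Multiplying the differentiated expression by $\widetilde U_N^{*}$ on the right and converting each surviving factor of $b(\varphi)$ or $b^{*}(\varphi)$ into a square root $[N-(\Number_b)_+]^{1/2}$ by means of the conjugation identities of Lemma~\ref{lemma:properties U} should reproduce the four particle terms $b^{*}(\varphi)b(q\,i\dot\varphi)-\sqrt{N-(\Number_b)_+}\,b(q\,i\dot\varphi)-b^{*}(q\,i\dot\varphi)\sqrt{N-(\Number_b)_+}-\scp{i\dot\varphi}{\varphi}\,(N-(\Number_b)_+)$.

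The delicate point, and the main obstacle, is precisely this bookkeeping in the particle sector: keeping track of the number-dependent square roots arising from $b(\varphi)^{N-k}/\sqrt{(N-k)!}$ and checking that the longitudinal contributions collapse into the single term $-\scp{i\dot\varphi}{\varphi}\,(N-(\Number_b)_+)$ while the transversal ones produce the off-diagonal pieces. A convenient consistency check is that $i\dot{\widetilde U}_N\widetilde U_N^{*}$ must be self-adjoint, a consequence of $\tfrac{d}{dt}(\widetilde U_N\widetilde U_N^{*})=0$; the claimed expression indeed is, since $\scp{\varphi}{\dot\varphi}$ is purely imaginary and hence $\scp{i\dot\varphi}{\varphi}$ is real. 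For the purely combinatorial details of this step I would follow the template of \cite{Lewin:2015a,LNSS2015} already adopted throughout this appendix.
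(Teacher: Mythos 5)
Your proposal follows essentially the same route as the paper: differentiate the factorization $U_N(t)=\widetilde U_N(t)\otimes W^{*}(\sqrt N\alpha(t))$ by the Leibniz rule, use that the two factors commute, take the particle generator from \cite[Lemma 6]{Lewin:2015a} (the paper cites exactly this, while you sketch its derivation and then defer to the same reference), and obtain the Weyl part by the standard BCH/Duhamel computation, which you carry out correctly where the paper instead cites \cite[Lemma A.3]{FG2017}. The only marginal imprecision is your self-adjointness check: the bracketed generator contains $b^{*}(\varphi)b(q\,i\dot\varphi)$ but not its adjoint $b^{*}(q\,i\dot\varphi)b(\varphi)$, so it is self-adjoint only up to terms that vanish on the range of $\widetilde U_N(t)$ (where $b(\varphi(t))$ acts as zero), which does not affect the validity of the argument.
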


\begin{proof}
From \cite[Lemma 6]{Lewin:2015a} we know that 
\begin{align}
\begin{split}
i \dot{\widetilde{U}}_N(t) \otimes \id_{\mathcal{F}}
&= \bigg[ b^{*} \left( \varphi(t) \right) b \left( q(t) i \dot{\varphi}(t) \right) 
- \sqrt{N - \left(\mathcal{N}_b(t) \right)_{+}} b \left( q(t) i \dot{\varphi}(t) \right) \\
&\quad - b^{*} \left( q(t) i \dot{\varphi}(t) \right) \sqrt{N - \left(\mathcal{N}_b(t) \right)_{+}}
- \scp{i \dot{\varphi}(t)}{\varphi(t)} \left( N - \left(\mathcal{N}_b(t) \right)_{+} \right) \bigg] \widetilde{U}_N(t) \otimes \id_{\mathcal{F}} .
\end{split}
\end{align}
Combining this with
\begin{align}
  \id_{\mathcal{F}} \otimes i \dot{W}^{*} \left( \sqrt{N} \alpha(t) \right)
&= - \bigg[ N \, \Re \scp{i \dot{\alpha}(t)}{\alpha(t)}
+ \sqrt{N} a \left( i \dot{\alpha}(t) \right)
+ \sqrt{N} a^{*} \left( i \dot{\alpha}(t) \right)
\bigg]  \id_{\mathcal{F}} \otimes W^{*} \left( \sqrt{N} \alpha(t) \right)
\end{align}
and using that both unitary mappings commute shows the claim.
The last equality can be obtained by  \cite[Lemma A.3. ($\alpha= 1$, $a = b$ and $a^{*} = b^{*}$)]{FG2017} and the fact that $W^{*}(f) = W(-f)$ for $f \in L^2({\mathbb{R}^3})$.
\end{proof}
Now, let $\Psi_N(t)$ and $(\varphi(t), \alpha(t))$ be solutions of \eqref{Schr_eq} and \eqref{varphi_eq}--\eqref{alpha_eq} such that $\norm{\varphi(0)}_{L^2(\mathbb{R}^3)} = 1$. Then $\chi_{\le N}(t) \in \left( \bigoplus_{k=0}^N ( \varphi(t)^\perp)^{\otimes_s k} \right) \otimes \mathcal{F} \subset \mathcal{F} \otimes \mathcal{F}$, given by $\chi_{\le N}(t) = U(t) \Psi_{N}(t)$ satisfies
\begin{align}
i \dot{\chi}_{\le N}(t)
&= \left( U_N(t) H_N^{\mathrm{Nelson}} U_N(t)^* + i \dot{U}_N(t) U_N(t)^* \right)  \chi_{\le N}(t) .
\end{align}
The Schr\"odinger equation \eqref{Schr_eq_Fock} is then obtained by means of Lemma \ref{lemma: time evolution unitary}, the Schr\"odinger--Klein--Gordon equations \eqref{varphi_eq}--\eqref{alpha_eq}, 
\begin{align}
b^{*}(\varphi(t)) b(q(t) h(t) \varphi(t))
&= \int dx \, b^{*}(x) 
\Big( h(t) - h(t) p(t) - q(t) h(t) q(t) \Big) b(x)
\end{align}
and 
\begin{align}\label{Ultimate_Relation}
\begin{split}
U_N(t) H_N^{\mathrm{Nelson}} U_N(t)^*
&= N \norm{\sqrt{w} \alpha(t)}_2^2 + H_f + N^{1/2}
\left( a(\omega \alpha(t) )  + a^{*}(\omega \alpha(t) )  \right)
\\
&\quad + \left( \scp{\varphi(t)}{h(t) \varphi(t)} + \mu(t) \right) 
\left( N - \left(\mathcal{N}_b(t) \right)_{+} \right)
\\
&\quad + \int dx \, b^{*}(x) q(t) \left( - \Delta + \Phi(t, \cdot) \right) q(t) b(x) 
\\
&\quad + N^{-1/2} \int dx\, b^{*}(x) \Big( q(t) \widehat{\Phi} q(t) - \scp{\varphi(t)}{\widehat{\Phi} \varphi(t)} \Big) b(x)
\\
&\quad +
N^{-1/2} \scp{\varphi(t)}{\widehat{\Phi} \varphi(t)} \left( N - \left(\mathcal{N}_b(t) \right)_{+} + \mathcal{N}_b \right)
 \\
&\quad
+ \Big( b^{*} \left( q(t) (- \Delta + \Phi(t,\cdot)) \varphi(t) \right) \left[ N - \left(\mathcal{N}_b(t) \right)_{+} \right]^{1/2} +  \text{h.c.} \Big) 
\\
&\quad + \int dx \int dk\,  K(t,k,x) \big( a^{*}(k) + a(-k)\big) b^*(x) \big[ 1- N^{-1}  \left(\mathcal{N}_b(t) \right)_{+} \big]^{1/2} + \text{h.c.} \, .
\end{split}
\end{align}
In order to obtain Equation~\eqref{Ultimate_Relation} one needs to write $H_N^{\mathrm{Nelson}}$ in the second quantized form on $\mathcal{F} \otimes \mathcal{F}$ and proceed in a similar way as in \cite[Chapter 4]{LNSS2015} and \cite[Appendix B]{Lewin:2015a}. Moreover, note that $\left( \mathcal{N}_b(t) \right)_{+} \chi_{\le N} = \mathcal{N}_b \chi_{\le N}$ holds for all $\chi_{\le N} \in \left( \bigoplus_{k=0}^N ( \varphi(t)^\perp)^{\otimes_s k} \right) \otimes \mathcal{F}$.
For notational convenience, we define the Fock spaces $\mathcal{F}_a$, $\mathcal{F}_b$ and $\mathcal G_{\leq N}$ as in \eqref{eq: definition excitation Fock space particles}, \eqref{eq: definition excitation Fock space field bosons} and \eqref{eq: definition k-particles excitation space} and view $U_N(t)$ as a mapping from $\mathcal{H}_N$ to $\mathcal G_{\leq N}$.\\

\noindent{\it Acknowledgments.} We would like to thank Phan Th\` anh Nam for
helpful remarks regarding the well-posedness of the Bogoliubov time
evolution.  N.L.\ gratefully acknowledges support from the SNSF Eccellenza
project PCEFP2 181153 and the NCCR SwissMAP. M.F.\ has been partially
supported by the European Research Council (ERC) under the European Union’s
Horizon 2020 research and innovation programme (ERC CoG UniCoSM, grant
agreement n.724939).

{}

\end{document}